\documentclass[letterpaper,USenglish,cleveref]{lipics-v2021}

\usepackage{macros}

\title{Convergent Gate Elimination and Constructive Circuit Lower Bounds} 

\author{Marco Carmosino}{MIT-IBM Watson AI Lab, Cambridge, MA, USA}{mlc@ibm.com}{https://orcid.org/0009-0007-1118-1352}{}

\author{Ngu Dang}{Boston University, Boston, MA, USA}{ndang@bu.edu}{https://orcid.org/0009-0004-2774-2247}{}

\author{Tim Jackman}{Boston University, Boston, MA, USA}{tjackman@bu.edu}{https://orcid.org/0000-0002-2293-5670}{}

\authorrunning{M. Carmosino, N. Dang, and T. Jackman} 

\Copyright{Marco Carmosino, Ngu Dang, and Tim Jackman} 

\ccsdesc[500]{Theory of computation~Rewrite systems}
\ccsdesc[100]{Theory of computation~Circuit complexity}

\keywords{Term Graph Rewriting, Circuit Complexity, Constructivity} 

\category{} 

\relatedversion{} 
\supplement{\supplementdetails[linktext={Machine-Readable Convergence Proofs},  swhid={Software Heritage Identifier}]{General Classification (e.g. Software, Dataset, Model, ...)}{https://feasible-math.org/} }

\acknowledgements{}

\EventEditors{John Q. Open and Joan R. Access}
\EventNoEds{2}
\EventLongTitle{42nd Conference on Very Important Topics (CVIT 2016)}
\EventShortTitle{CVIT 2016}
\EventAcronym{CVIT}
\EventYear{2016}
\EventDate{December 24--27, 2016}
\EventLocation{Little Whinging, United Kingdom}
\EventLogo{}
\SeriesVolume{42}
\ArticleNo{23}
\begin{document}
\nolinenumbers
\maketitle

\begin{abstract}
  Gate elimination is the only method known that can prove complexity lower bounds for explicit functions against unrestricted Boolean circuits.  After decades of effort, the best results are still linear: Li and Yang proved lower bounds of $3.1n - o(1)$ in the basis $\mathcal{B}_2$ consisting of all binary Boolean functions (STOC 2022), and Iwama and Morizumi proved lower bounds of $5n - o(1)$ in the DeMorgan basis consisting only of AND, OR, and NOT functions (MFCS 2002).  These arguments deploy intricate and lengthy case analyses that substitute carefully-chosen inputs into circuits and track which gates become redundant -- thus, ``gate elimination''.  Unlike other methods in complexity theory, there is only limited characterization of the obstacles to improving circuit lower bounds proved via gate elimination.

  Towards better understanding of gate elimination, this work contributes: (1) formalizing circuit simplifications as a convergent term graph rewriting system and (2) giving a simple and constructive proof of a classical lower bound using this system.
  
  First, we show that circuit simplification is a convergent term graph rewriting system over the DeMorgan and $\{\land, \lor, \oplus\}$ bases.  We define local rewriting rules from Boolean identities such that every simplification sequence yields an identical final result (up to circuit isomorphism or bisimulation).  Convergence enables rigorous reasoning about structural properties of simplified circuits without dependence on the order of simplification.  Then, we show that there is \emph{no similar} convergent formalization of circuit simplification over the $\cU_2$ and $\cB_2$ bases.
  
  Then, we use our simplification system to give a constructive circuit lower bound, generalizing Schnorr's classical result that the XOR function requires $3(n - 1)$ gates to compute in the DeMorgan basis.  A constructive lower bound $f \not\in \mathcal{C}$ gives an algorithm (called a ``refuter'') that efficiently finds counter-examples for every $\mathcal{C}$-circuit trying to compute the function $f$.  Chen, Jin, Santhanam, and  Williams showed that constructivity plays a central role in many longstanding open problems about complexity theory (FOCS 2021), so it is natural to ask for constructive circuit lower bounds from gate elimination arguments.  This demonstrates how using convergent simplification can lead to shorter and more modular proofs of circuit lower bounds. Furthermore, until this work, no constructive lower bound had been proved via gate elimination.  
\end{abstract}

\newpage

\section{Introduction}
\label{sec:intro}

Circuits model the computation of Boolean functions on fixed input lengths by acyclic wires between atomic processing units -- logical ``gates.''  To measure the circuit complexity of a function $f$, we fix a set of gates $\cB$ -- called a \emph{basis} -- and count the number of binary gates required to compute $f$. Some of the well-known bases include the DeMorgan basis (binary AND, binary OR, and unary NOT gates), the $\cU_2$ basis (all binary Boolean functions except parity and equivalence), and the $\cB_2$ basis (all binary Boolean functions).

Shannon proved in 1949 -- using a simple counting argument -- that most $n$-input Boolean functions require near-maximal circuit complexity ($2^n/n$ gates) to compute \cite{Shannon49}.  This is almost the trivial ``memorization''  complexity obtained for any $n$-input Boolean function by writing a DNF using $O(n2^n)$ gates.  Unfortunately, counting arguments only establish that such hard functions exist.  Identifying \emph{specific} Boolean functions with high circuit complexity is a major open problem, inextricably connected to the ongoing search for efficient pseudorandom bit generators via celebrated \emph{hardness-randomness tradeoffs:} strong enough circuit complexity lower bounds imply $\P = \BPP$ \cite{DBLP:journals/jcss/NisanW94,DBLP:conf/stoc/ImpagliazzoW97,DBLP:journals/cc/KabanetsI04}.  The past seven decades have seen impressive super-polynomial complexity lower bounds proved for circuits \emph{restricted to constant depth} \cite{DBLP:journals/mst/FurstSS84, DBLP:conf/stoc/Smolensky87, DBLP:journals/jacm/Williams14}.  However, complexity lower bounds for \emph{unrestricted} circuits are required for efficient universal derandomization, and remain a substantial challenge.

\emph{Gate elimination} is the most successful technique known for proving complexity lower bounds on \emph{specific} Boolean functions against \emph{unrestricted} circuits over the DeMorgan, $\mathcal{U}_2$, \cite{Redkin1973, Schnorr74, Zwick91, LachishR01, IwamaM02} and $\mathcal{B}_2$ bases \cite{Schnorr74, Stockmeyer77, DemenkovK11, FindGHK2016, Li022}.  The idea is straightforward: input variables are substituted (by constants or by other functions) and then the resulting circuit is simplified. During simplification, complexity measures like the number of gates and assertions about circuit structure are tracked and lifted to lower bounds using induction. For example, if restriction yields a desirable subfunction (e.g., the same function on fewer variables) while removing $c$ gates, then we can inductively argue that the original circuit had at least $c \cdot n$ gates. The simplest gate elimination arguments follow this schematic exactly, substituting constants and removing gates according to basic Boolean identities. Schnorr's tight $3(n-1)$ bound for $\XOR_n$ (and $\neg\XOR_n$) \cite{Schnorr74} exemplifies this nicely: substituting for an input with a constant allows removal of at least three gates and results in a circuit computing $\XOR_{n-1}$ (or $\neg \XOR_{n-1}$). Induction shows that any $\XOR_n$ circuit must have at least $3(n-1)$ gates.

However, progress has been slow and the best known lower bounds remain linear: about $5n$ for the DeMorgan and $\mathcal{U}_2$ bases \cite{LachishR01, IwamaM02} and $3.1n$ for the $\mathcal{B}_2$ basis \cite{Li022}.  Why?  Ongoing attempts to improve these results and understand both the capabilities and limitations of gate elimination are an important research direction.  In a exciting recent development, breakthrough ETH-hardness results used gate elimination to characterize and exploit the \emph{structure} of the linear-size circuits \cite{Ilango20}.  Combined with new results showing that constructive separations (efficient algorithms generating inputs where an incorrect algorithm errs, defined below) are \emph{necessary} for widely-conjectured complexity lower bounds \cite{DBLP:journals/theoretics/00010S024}, this raises natural questions.  Are proofs by gate elimination constructive?  If so, is that an obstacle or an asset?

The questions above motivate this work.  Despite the fundamental role of gate elimination, our understanding of this methodology remains informal and all known arguments are ad-hoc.  We present two contributions:
\begin{enumerate}
    \item Formalization of the circuit-simplification process used in all gate elimination arguments as a \emph{convergent term graph rewriting system.}
    
    \item A \emph{constructive separation} of $\XOR_n$ from circuits of size $< 3(n - 1)$, derived by re-writing Schnorr's gate elimination proof using our convergent circuit simplification system.
\end{enumerate}

\subsection{Our Results \& Research Directions}

\subparagraph*{Convergent Systems for Gate Elimination.}

In Section \ref{sec:tgrs-details}, we develop a \emph{formal} foundation for gate elimination as a \textbf{\emph{convergent} term graph rewriting system} using ideas from term rewriting \cite{BaaderN1998} and graph-based computation models \cite{Plump99}. Sections \ref{sec:boolean-identities} and \ref{sec:term-rw} define simplification systems for Boolean formulas over the DeMorgan and $\{\land, \neg, \oplus\}$ bases where rules are derived from carefully-chosen Boolean identities.  Then, the Knuth-Bendix algorithm \cite{KNUTH1970, HUET1981} is applied to produce convergent formula simplification system. In Section \ref{sec:term-graph-rw}, these \emph{formula} rewriting systems are lifted into \emph{circuit} rewriting systems via Plump's term graph rewriting framework \cite{Plump99}. The resulting systems are \emph{convergent}, that is, any valid sequence of simplifications terminates in a unique normal form, regardless of order.

\begin{theorem*}[Informal Statement of Theorem \ref{thm:GE-confluent}]
    Circuit simplification is convergent in the DeMorgan ($\{\land, \lor, \neg\}$) and $\{\land, \oplus, \neg\}$ bases.
\end{theorem*}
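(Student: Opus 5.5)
The plan follows the pipeline sketched above and proceeds in three stages: first a convergent \emph{formula} (term) rewriting system for each basis, then a lift to \emph{circuits} via Plump's term graph rewriting, and finally the transfer of termination and confluence from terms to term graphs.

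\textbf{Stage 1: the term systems.} For each of the two bases I will fix a finite set of oriented Boolean identities. These are:
\begin{itemize}
\item constant propagation, e.g.\ $x \land 0 \to 0$, $x \land 1 \to x$, $x \oplus 0 \to x$, $x \oplus 1 \to \neg x$;
\item idempotence and annihilation, e.g.\ $x \land x \to x$, $x \land \neg x \to 0$, $x \oplus x \to 0$;
\item double negation $\neg\neg x \to x$, together with negation of constants.
\end{itemize}
Commutativity cannot be oriented, so I will either work modulo commutativity (AC-matching on the binary symbols) or include the commuted copies of each rule. I will then run Knuth--Bendix completion until every critical pair is joinable. Termination of the completed system will be certified by a reduction ordering; the natural first choice is a polynomial interpretation in which every rule strictly decreases weighted size, since all rules delete symbols. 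Confluence then follows from the critical pair lemma together with Newman's lemma, because local confluence plus termination gives confluence. I expect completion to need a few extra rules. For the $\{\land,\oplus,\neg\}$ basis I must also decide how $\neg$ interacts with $\oplus$, for instance by adding $\neg x \oplus y \to \neg(x\oplus y)$, so that negations are pushed to a canonical position.

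\textbf{Stage 2: lifting to circuits.} Following Plump, each term rule becomes a term graph rule $l \Rightarrow r$, applied by hypergraph matching with sharing, and the circuit is treated as a term graph whose nodes are gates. Plump's framework requires that garbage (nodes no longer reachable from an output) be collected, and that collapsing rules be handled. I will also include sharing-merging steps: two gates with identical label and identical children are identified. This makes the notion of normal form stable up to isomorphism.

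\textbf{Stage 3: transfer of convergence.} This is the main obstacle. Termination lifts directly, because a graph step either deletes gates or strictly decreases the interpretation measure, with the measure now summed over nodes. Confluence does \emph{not} lift automatically, since term graph rewriting can fail to be confluent for left-nonlinear rules such as $x\land x\to x$ or $x\oplus x \to 0$. There the pattern requires two \emph{identical} subgraphs, and in a graph "identical" means either shared or merely bisimilar.

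I will handle this with Plump's critical pair theorem for term graph rewriting: a terminating term graph rewriting system is confluent if all of its graph critical pairs are strongly joinable. To meet that hypothesis I will either
\begin{itemize}
\item interleave the system with collapsing (maximal sharing) steps, so that nonlinear matches are always on shared nodes; or
\item state convergence up to bisimulation rather than isomorphism.
\end{itemize}
Either way, the remaining work is a finite check that each graph critical pair obtained by overlapping two rules joins. That check I would discharge mechanically, as the paper's machine-readable supplement suggests.
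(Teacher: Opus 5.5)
Your Stages 1 and 2 follow the paper's route. You orient the gate-elimination identities with separate left and right rules instead of commutativity, and you run Knuth--Bendix to get a convergent TRS (for DeMorgan this is $\mathcal{R}_B$, produced by KBCV and checkable by CeTA). You then lift it to term graphs in Plump's framework, with or without collapse.

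The gap is Stage 3. You assert that confluence does not transfer, and you leave graph-level confluence to an unexecuted ``mechanical'' check of graph critical pairs. The missing idea is that convergence of the TRS already transfers. This is exactly the step the paper takes by citing Plump: Corollary 1.7.4 gives rewriting plus collapse up to isomorphism, and Theorem 1.7.12 gives the collapse-free system up to bisimilarity. So nothing beyond the convergence of $\mathcal{R}_B$ has to be verified.

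With collapse, the left-nonlinear rules are harmless, for a simple reason.
\begin{enumerate}
\item A graph $H$ that is normal for rewriting plus collapse is fully collapsed, so nodes with equal terms coincide.
\item Hence every redex of $\mathrm{term}(H)$, including instances of $g \land g$ or $g \land \neg g$, is matched by $\diamond l$ at the corresponding node, and $\mathrm{term}(H)$ is $\mathcal{R}_B$-normal.
\item Every step satisfies $\mathrm{term}(G) \to^{*} \mathrm{term}(H)$, so confluence of $\mathcal{R}_B$ forces all normal forms of $G$ to have the same term.
\item Fully collapsed graphs with the same term are isomorphic, and uniqueness of normal forms together with termination is confluence.
\end{enumerate}
Note also that the supplement certifies only the term rewriting system, so there is no existing mechanized graph critical-pair check to appeal to.

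Three smaller repairs.
\begin{itemize}
\item Lift termination through the simulation property: each proper step gives $\mathrm{term}(G) \to^{+} \mathrm{term}(H)$, and a collapse step keeps the term while removing nodes. Do not sum an interpretation over nodes, because that sum can increase when a right-hand side creates an internal node. For example, your $\neg x \oplus y \to \neg(x \oplus y)$, applied at the root while the $\neg x$ node stays shared, adds a fresh $x \oplus y$ node of unbounded weight.
\item Completion need not return symbol-deleting rules; the paper's run adds $0 \to \neg 1$. So you cannot assume a size-decreasing ordering will certify termination.
\item Drop the AC-matching option, since Plump's lifting results concern ordinary term rewriting systems. The commuted-copies option is the one the paper uses.
\end{itemize}
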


This convergence property is useful for making gate-elimination arguments both clean and automatable: because every simplification sequence terminates in the same normal form, the structural properties of the simplified circuit are \emph{invariant} under the choice of rewrites, letting us reason about circuits ``up to normal form'' rather than requiring justification for each possible simplification path. This is especially important for computer-aided proofs -- non-terminating rewrite systems may not even halt, whereas convergent rewriting guarantees completion and makes brute-force exploration unnecessary.  We hope that casting gate elimination in a convergent term-graph rewriting framework contributes to ongoing formalization of complexity theory (see Section \ref{sec:related-work} on Related Work).

\begin{direction} 
    Formalize known gate elimination arguments in standard proof assistants with the purpose of developing new computer-assisted gate elimination arguments and ultimately improved circuit lower bounds.
\end{direction}

To further this goal, we reproved Schnorr’s lower bound for $\XOR_n$ \cite{Schnorr74} using the system (Appendix \ref{sec:ckt-rewrite-demo}) and in a structured format, aiming to ease formalization (Appendix \ref{sec:schnorr-structured-proof}). Verifying convergence of our system and Schnorr's argument using a proof assistant like Lean4 is a natural next step.

\subparagraph*{Non-confluent Gate Elimination.}
We explore varying our systems in Section \ref{sec:u2-ge-not-confleunt} to surprising effect. Classical circuit complexity arguments conflate the DeMorgan basis and $\cU_2$ as interchangeable. This treatment is not baseless: with respect to size and depth bounds, the two bases are identical for all non-degenerate functions except $f(x) = \neg x$. This equivalence also holds between $\{\neg, \land, \oplus\}$ and $\cB_2$; arguments like \cite{Paul75} begin by categorizing $\cB_2$ gates as ``AND''-like and ``XOR''-like, informally changing the basis to $\{\neg, \land, \oplus\}$. We make these folklore connections between the bases explicit in Appendix \ref{sec:demorgan-u2-equivalence}. However, we find that when formalizing gate elimination as a rewriting system, this equivalence between the bases immediately breaks down.

\begin{observation*}[Informal Statement of Observation \ref{obs:u2-b2-not-confluent}]
    $\cU_2$ and $\cB_2$ do not admit confluent circuit simplification systems.
\end{observation*}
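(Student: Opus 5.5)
The plan is to exhibit a \emph{critical pair} that cannot be joined. We fix the natural class of simplification systems: local rewrite rules derived from sound Boolean identities, which only ever decrease or preserve gate count, and in which every gate is labelled by some binary function of $\cU_2$ (respectively $\cB_2$). Confluence requires every peak $t_1 \leftarrow t \rightarrow t_2$ to be joinable. The key point is that in $\cU_2$ and $\cB_2$ negations are not separate gates: they are absorbed into gate labels (e.g.\ $\neg x \land y$ is a single gate). Different orders of constant substitution and absorption can therefore leave the ``negation'' at different syntactic positions, and no local rule can relocate it without changing the gate structure.

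The concrete steps are as follows. First, pick a small circuit, e.g.\ a gate $g = h(x,y)$ feeding two gates $g_1 = \phi(g,z)$ and $g_2 = \psi(g,w)$, where fixing one input of $h$ turns $g$ into the negation $\neg x$ (for instance $h = \neg x \land \neg y$ with $y = 0$). Second, note that a simplification removing $g$ must push the negation into $g_1$ or $g_2$ (rewriting $\phi$ to $\phi(\neg\cdot,\cdot)$). Since no unary NOT gate exists, there are two simplification paths, depending on which successor is processed first or how the rule matches. If fanout is handled locally, some of these paths can yield circuits whose gate labels differ in a way no rule can reconcile. Third, argue that both results are irreducible, or reduce only to distinct normal forms. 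Any rule identifying them would have to rewrite a single binary gate label in isolation into another function, and a sound rule cannot do that because the two gates compute different functions of their inputs. For $\cB_2$ we would additionally use XOR-type gates, where $\neg(x\oplus y)$ can absorb the negation into either $x$, $y$, or the gate itself. This gives three syntactically distinct, functionally equal, equal-size circuits, and a size-nonincreasing terminating system cannot orient all of them consistently without an arbitrary choice. That arbitrary choice, in turn, creates new unjoinable overlaps.

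The main obstacle is making ``no similar convergent system'' precise: the claim must hold for every admissible rule set, not just ours. We would handle this by fixing the admissibility conditions stated above (local, sound, gate-count nonincreasing, $\cU_2$/$\cB_2$-labelled) and showing that the counterexample's two normal forms are distinct circuits with the same size and the same function. Concretely, the strategy is to first try the $\oplus$-gate example in $\cB_2$, where three equivalent placements of a negation are pairwise symmetric. Any orientation that picks a canonical placement will break this symmetry, so it cannot be stable under input permutation or bisimulation. This yields two non-isomorphic normal forms from a single source. For $\cU_2$, the analogous symmetry is De Morgan duality of $\land$-type gates.
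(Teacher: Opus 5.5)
\textbf{There is a genuine gap: your concrete example never diverges.} After $y \leftarrow 0$, the gate $g$ becomes a negation gate ($\ast_4$ or $\ast_6$) whose argument is the \emph{input} $x$. The only way to remove it is to absorb the negation into its readers. Relabeling $g_1$ and relabeling $g_2$ act on different gates and commute. Whichever successor you process first, you reach the same circuit one step later, and $g$ is then garbage-collected. So ``which successor is processed first'' is not a source of non-confluence.

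The missing idea is the second direction of absorption. A superfluous negation gate whose argument is a binary gate (with no other readers) can also be pushed \emph{down} into that predecessor's label. The paper's peak is exactly this. Take an $\ast_4$ gate that reads an $\ast_8$ gate and feeds the first input of an $\ast_{10}$ gate. Pushing up relabels $\ast_{10}$ to $\ast_{14}$; pushing down relabels $\ast_8$ to $\ast_7$. The two results are non-isomorphic. The paper then argues that neither direction can be dropped from the system:
\begin{itemize}
\item a negation reading an input can only be pushed up;
\item a negation at the output can only be pushed down;
\item such negation gates arise unavoidably even from optimal circuits, because passing steps create them (e.g.\ $p \ast_8 q$ under $q \leftarrow 1$ must become $\ast_4$).
\end{itemize}
Your proposal never establishes this necessity. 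Without it, nothing forces an admissible system to contain two overlapping rules.

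Your fallback for $\mathcal{B}_2$ does not work either. Having several functionally equal, equal-size placements of a negation does not imply non-confluence. A terminating system may simply fix a canonical representative, e.g.\ always absorb the negation into the $\oplus$ gate to get $\odot$, which is even symmetric in $x$ and $y$. Likewise, a rule set that is not invariant under input permutation tells you nothing about confluence, which concerns all reductions from one fixed circuit. The paper's own convergent DeMorgan system has separate left and right rules.

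Finally, your claim that any rule joining two such results would have to relabel a single gate in isolation is false. A local, sound, size-preserving rule may relabel two adjacent gates at once, e.g.\ rewriting $\ast_{14}(\ast_8(a,b),c)$ to $\ast_{10}(\ast_7(a,b),c)$. So your plan to rule out \emph{every} admissible rule set rests on an unjustified step. The paper does not attempt that stronger claim. Its argument is only that the up and down rules are forced and their outcomes differ, which is why it is stated as an informal observation.
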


This failure arises from the presence of superfluous negation gates which need to be subsumed upwards and/or downwards during simplification, causing divergent simplification paths and ultimately non-isomorphic circuits (see Figure \ref{fig:pushing-example} in Section \ref{sec:u2-ge-not-confleunt}). 

Even different bases are often treated similarly since translations often only sees constant changes to size and depth. As researchers are interested in proving non-linear lower bounds, this cost is seemingly inconsequential. This results highlights how impactful a basis choice can be; arguments in the DeMorgan basis or $\{\land, \oplus, \neg\}$ can be \emph{easier} than their $\cU_2$ or $\cB_2$ counterparts. This is especially true if implemented in a proof assistant -- formalizing in these bases may cut down the search space depending on the specific argument. 

By casting gate elimination as a rewriting system, we gain a better understanding of its underlying properties. Besides decades of limited progress and known limitations (see Related Works), provable barriers to using gate elimination to prove superlinear circuit bounds remain unknown.

\begin{direction}
    Further investigate gate elimination as a rewriting system. How does varying bases and rules affect its power? Does there exist a barrier for the technique against proving circuit lower bounds?
\end{direction}

\subparagraph*{Constructive Circuit Lower Bounds.}
The statement ``Boolean function $f$ does not have size $s(n)$ circuits'' has quantifier structure $\forall\exists$ when translated into logic:
\[
\forall n \in \nat ~~ \forall C \text{ such that } \mathsf{size}(C) < s(n) ~~ \exists x \in \bool^n ~~ C(x) \neq f(x).
\]
Concretely, every circuit that fails to compute $f$ must err on a particular input $x$.  Thus, every circuit complexity lower bound for a specific function $f$ induces a natural total search problem: given as input a circuit $C$ that is too small (and therefore \emph{must} fail to compute $f$) print a bitstring $x$ such that $C$ and $f$ disagree.  Exciting recent work discovered that constructive separations play a surprisingly central role in complexity theory, by showing that most conjectured uniform separations (including $\P \neq \NP$, $\ZPP \neq \EXP$, and $\BPP \neq \NEXP$) can be automatically transformed into constructive separations \cite{DBLP:journals/theoretics/00010S024}.

On the other hand, they also showed that if certain \emph{well-known} lower bounds (including ``deciding palindromes requires super-quadratic time on one-Tape Turing Machines'', \cite{DBLP:conf/stoc/Maass84}) can be made constructive, then major complexity breakthroughs (like $\P \neq \NP$) would follow \cite{DBLP:journals/theoretics/00010S024}.  Essentially, the intuition that ``simple'' proofs of complexity lower bounds should be ``easy'' to make constructive is wildly inaccurate.  We show that, at least in the case of gate elimination, simple proofs \emph{can} be converted into constructive separations.

\begin{theorem*}[Informal Statement of Theorem \ref{thm:xor-refuter}]
Let $C$ be any DeMorgan circuit on $n$ inputs with fewer than $3(n-1)$ binary gates.  There is an efficient and deterministic algorithm that, given input $C$, prints an $n$-bit string $x$ such that $C(x) \neq \XOR_n(x)$.
\end{theorem*}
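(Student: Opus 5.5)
We make Schnorr's gate elimination argument algorithmic: an iterative restrict-and-simplify procedure that fixes one input per round, with the convergent DeMorgan simplification system of Theorem~\ref{thm:GE-confluent} doing the simplification. The refuter keeps a partial assignment $\rho$, a current circuit $C_\rho$ on the surviving $m$ variables, and a target $g \in \{\XOR_m, \neg\XOR_m\}$. The invariant is that $C_\rho$ is the simplified restriction of $C$ under $\rho$, that $g$ equals $\XOR_n$ restricted by $\rho$, and that $C_\rho$ has fewer than $3(m-1)$ binary gates. Initially $m=n$, $\rho$ is empty, and $C_\rho = C$, possibly first normalized by the rewriting system.

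Each round does the following.
\begin{enumerate}
\item Run the rewriting system to normal form. Termination and confluence let us treat the normal form as a well-defined object with structural properties: no constant-fed gates, no gates with two identical children, and negations pushed and collapsed as the rules dictate. Each rewrite shrinks the circuit, so normalization takes polynomial time.
\item Inspect the normal form using Schnorr's case analysis, in its constructive form. If $m \ge 2$ and some variable $x_i$ has out-degree at most $1$, or the circuit is otherwise degenerate, then $C_\rho$ cannot compute $g$ while being this small. In that case we find a witness directly. For example, if $x_i$ feeds nothing, flipping $x_i$ changes $g$ but not $C_\rho$, so one of two adjacent inputs is a counterexample, and we evaluate both.
\item Otherwise, take the topologically first gate $G$ that reads a variable. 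Its inputs are two variables of out-degree at least $2$, and we choose which variable to set and to which constant. The constant should be controlling for $G$ (AND with $0$, OR with $1$, accounting for negations), so that $G$ becomes constant, its other fan-out also simplifies, and at least three gates disappear.
\item Update $\rho$ and $g$ by flipping the parity target if the constant is $1$, and recurse.
\end{enumerate}
The size budget drops by $3$ each round, so after at most $n-1$ rounds we either hit the degenerate case or reach a circuit with a negative gate bound. When $m \le 1$, we evaluate $C_\rho$ on the at most two remaining assignments and find the disagreement. It must exist: if no disagreement existed at any stage, the restricted circuits would correctly compute parity at each step and the gate count would contradict Schnorr. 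The final witness is $\rho$ extended by the local counterexample. It is correct because the rewrites preserve semantics, so $C(x) = C_\rho(x|_{\text{free}})$.

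The main obstacle is step 3 together with its failure mode. We must show that in every normal form of a small circuit purportedly computing $g$, some substitution eliminates at least three gates. When no such substitution exists, we must show the structural defect exposes an explicit error rather than just a contradiction. Schnorr's proof argues "otherwise $C$ does not compute parity", and we must replace each such clause with an input computed in polynomial time. We would first use confluence to argue that the normal form has no redundant double-feeds. The remaining cases are the following.
\begin{itemize}
\item A variable of out-degree $1$ into a gate $G$ whose other input does not depend on it. Fix the other inputs so that $G$ is forced; flipping the variable then leaves the output unchanged while parity changes.
\item The controlled gate being the output. Setting the variable yields a constant circuit on $m-1 \ge 1$ free variables, which errs on one of two inputs that differ in parity.
\end{itemize}
Each case yields a witness after polynomially many circuit evaluations, so the whole refuter runs in polynomial time.
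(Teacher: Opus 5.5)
\paragraph{Gap in step~2.} Your overall structure follows the paper's Algorithm~\ref{algo:schnorr}. You normalize with $\cS'$ and return a flip witness when a variable is unread or a constant-output witness. Otherwise you substitute one variable to remove three gates, and you brute-force at the end. The problem is in step~2.

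You test \emph{every} variable for out-degree $\le 1$. For a variable $x_i$ whose only successor is $G = (\neg)x_i \diamond H$, you ``fix the other inputs so that $G$ is forced.'' When $G$ is an arbitrary gate, this means finding an assignment on which the sibling subcircuit $H$ takes the controlling value of $\diamond$. That is a circuit-satisfiability problem on $H$, and you give no efficient way to solve it.

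Worse, it may have no solution, because $\cR_B$ only removes syntactic patterns such as $g \lor \neg g$. Take $x_0 \land ((x_1 \land x_2) \lor (\neg x_1 \lor \neg x_2))$. It is in normal form and has four binary gates on three variables, below $3(3-1)$. The variable $x_0$ has a single successor, yet the sibling is identically $1$. So the AND gate can never be forced, and your step~2 stalls with no witness.

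Likewise, ``otherwise degenerate'' can only mean the syntactic condition that some variable is unread in the normal form. Deciding semantic independence from a variable is coNP-hard for general circuits.

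\paragraph{The repair, and two smaller points.} The fix is what the paper does: apply the fan-out test only to the literals $(\neg)x_p, (\neg)x_q$ feeding the topologically first binary gate $h$. There the sibling is itself a literal, so forcing $h$ is a single substitution for $x_q$. After it, $x_p$ is syntactically unread and the flip argument yields a witness. Step~3 only needs the substituted variable to have fan-out at least $2$, so nothing else is lost.

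First smaller point: you set $x_p$ to control $G$ itself. The paper, following Schnorr, instead sets $x_p$ to control its other successor $f$. That removes $h$, $f$, and a costly successor $f'$ of $f$, and $f' \neq h$ follows automatically from the topological minimality of $h$. With your choice the three gates are $G$, $f$, and a costly successor $G'$ of $G$, and $G' = f$ is possible. In that case you need an extra argument: $f$ then has two constant inputs and becomes constant. So either $(\neg)f$ is the output, giving a constant-circuit witness, or a successor of $f$ supplies the third gate.

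Second smaller point: your invariant makes $m \le 1$ unreachable, since it would demand fewer than $0$ gates. The real obligation is therefore to show that the $m=2$ round always exits through a witness case. Stopping at $m = 2$ and brute-forcing the four completions, as the paper does, avoids this, because fewer than three gates cannot compute $(\neg)\XOR_2$.
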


This immediately raises further questions: what about other circuit lower bounds proved using gate elimination?  Though some employ sophosticated combinatorial objects, they still use a simple inductive simplification structure.  What about other notions of ``constructivity''?  The efficient error generation algorithms discussed above are not \emph{logical} notions becuase the proof itself is not treated as a mathematical object.  Subsequent work showed that many features of constructive separations in the \emph{algorithmic} sense still hold when we consider proofs in \emph{weak fragments of Peano Arithmetic} closely associated with computational complexity classes, like Cook's $\lang{PV}$ (which is connected to deterministic polynomial time) \cite{DBLP:conf/stoc/GrosserC25}.  There is a recent turn towards understanding the proof-theoretic strength required to establish known theorems and resolve open problems of complexity theory, and it is entirely unknown where gate elimination fits into this landscape \cite{DBLP:journals/sigact/Oliveira25,DBLP:conf/focs/0001LO24}.  The foregoing discussion motivates

\begin{direction}
   Analyze gate elimination using both ``algorithmic'' and logical proof theory.  Which gate elimination proofs can be transformed into constructive separations?  Can weak logical theories (like $\lang{PV}$) prove circuit lower bound via gate elimination?
\end{direction}

\subsection{Related Work} \label{sec:related-work}

\subparagraph*{Proving \& Exploiting Structural Characterization of Circuits.}
Beyond the problem of proving tight circuit bounds for a particular Boolean function lies a natural \emph{design} question  which asks ``what is the structure of its optimal circuit?'' Though this question empirically appears difficult to answer in general \cite{Wegener1987}, it has been investigated for a handful of functions and bases over the last few decades. The primary method here is also gate elimination: one proves that any deviation from the desired structure would allow one to eliminate ``too many'' gates, contradicting optimality. Some of the earliest work on this question include \cite{Sattler81} and \cite{BlumS84} which investigated when optimal circuits for $2$-output Boolean functions compute each output independently of the other. Subsequent work characterized optimal circuits computing $\XOR$ in the DeMorgan basis under the convention that NOT-gates \emph{contribute} to circuit size \cite{Kombarov2011}, and later extended these ideas to other bases \cite{Kombarov2018}. More recently, Rahul Ilango \cite{Ilango20} characterizes the optimal structure of circuits computing $\bigvee_{i=1}^n(x_i\land y_i)$, and leverages this to obtain a breakthrough hardness reduction: the partial function minimum circuit size problem ($\MCSP^*$) is hard assuming the exponential time hypothesis (ETH).

Structural characterizations also have practical implications, but their recent applications highlight a broader theoretical payoff. In particular, Carmosino, Dang, and Jackman \cite{CarmosinoDJ2025} explicitly target and characterize optimal circuits of functions which enjoy linear sized circuits such as the exclusive-or function $\XOR$ where they show that the result of \cite{Kombarov2011} also holds when NOT-gates do \emph{not} contribute to circuit size under the same basis. They further show that the structure of optimal circuits for \emph{simple extensions} (function extensions with minimal additional circuit complexity) enables a fixed-parameter tractable algorithm to identify them. Together, these developments motivate studying gate elimination even if the technique seems unable to prove superlinear bounds: understanding its intrinsic strengths and limitations as a method for proving and exploiting structural theorems may be valuable independently of its role as a lower-bound tool.

These new applications of gate elimination motivate its study even if the technique seems unable to prove superlinear bounds. Thus, we believe that further study is required to understand the strengths and weaknesses inherent to gate elimination as a technique itself, rather than as a tool for proving lower bounds. 

\subparagraph*{Formalizing Circuit Complexity with Proof Assistants.}
Besides its theoretical importance, circuit complexity has clear practical relevance; circuit minimization, synthesis, and verification are central tasks in hardware design. These applications have motivated the creation of a rich suite of software for the design of Boolean circuits \cite{BraytonM10,SoekenRHM18,ReichlSS23,TangZZCLYX25}. Recent open-source tools like Cirbo \cite{AverkovBEGKKKLL25} employ SAT-based techniques and optimal circuit databases to enable efficient optimization, analysis and experimentation \cite{GoncharovKL2025}. While these tools are useful for upper bounds, they do not provide methods for lower bounds beyond exhaustive search. 

The use of theorem provers in studying circuit complexity has mostly been limited to verification. \cite{ShirazH18} provides a HOL4 library which verifies common functional elements and can structurally analyze complex circuits which are composed of these widgets. Rocq has also been used for the verification of classical circuits and quantum circuits \cite{Braibant11, RandPZ17}. One of the few formally verified lower bounds -- Clique cannot be computed by polynomial size monotone circuits -- was formalized in Isabelle/HOL \cite{Thiemann2022}. However, to the best of our knowledge, this work is the first to model circuits as term graphs. This representation is not impracticable for proof assistants -- \cite{WebbHU23} recently formally verified term graph optimizations in Isabelle/HOL. By formally studying gate elimination, we hope to build up the set of tools for verifiable circuit lower bounds and bridge the gap with the extensive upper bound tooling.

\subparagraph*{Barriers Against Gate Elimination.} The relativization, natural proofs, and algebrization barriers to improving complexity separations are not known to apply to gate elimination \cite{DBLP:conf/stacs/RenS22}.  Investigating its limitations, \cite{GolovnevHKK18} constructed functions whose circuits are ``resistant'' to gate elimination: any single substitution reduces their complexity by merely a constant.  As gate elimination arguments rely on a small constant number of substitutions, current techniques cannot prove superlinear circuit lower bounds for these functions.  This analysis is limited in two ways: (i) this requires the class to contain their constructions, which is not always given, and (ii), this limit does not say much about using gate elimination for applications besides circuit lower bounds.


\section{Representing Boolean Circuits as Term Graphs}
\label{sec:prelims}

We study general circuits over two bases. The first is the \emph{DeMorgan basis} $\cB_D = \{ \land, \lor, \neg , 0, 1 \}$ of Boolean functions: binary $\land$ and $\lor$, unary $\neg$, and zero-ary (constants) $1$ and $0$.  The second basis is $\{\land, \oplus, \neg, 0, 1\}$, where $\lor$ has been replaced by the binary exclusive-or function $\oplus$. Circuits take zero-ary variables in $X = \{x_1, x_2, \dots , x_n\}$ for some fixed $n$ as \emph{inputs,} and have exactly one distinguished \emph{output} gate.  Usually, circuits are described as directed acyclic graphs (DAGs) with nodes labeled by function symbols or variables and edges representing ``wires'' between the gates.  For every such circuit $C$, acyclicity of the underlying graph ensures that $C$ computes a well-defined $n$-ary Boolean function by evaluating each gate in topological order according to the function it is labeled with (starting with a substitution of concrete bits into the variables) until the output gate has a value. 

The complexity of a Boolean circuit can be measured in a number of ways including size and depth. In this work, \emph{\textbf{circuit size}} is defined as the \emph{number of \textbf{binary} gates} (i.e., $\neg$ gates are ``free''). Under this measure, the circuit complexity of non-degenerate Boolean functions -- the size of the smallest circuits computing them -- are the same in the DeMorgan basis and $\cU_2$, the basis consisting of the set of all binary Boolean operators besides $\oplus$ and its complement $\odot$. The same relationship holds between $\{\land, \oplus, \neg, 0, 1\}$ and $\cB_2$ = $\cU_2 \cup \{\oplus, \odot\}$. We make this relationship explicit in Appendix \ref{sec:demorgan-u2-equivalence}.

To apply results from term graph rewriting, however, we must describe circuits as \emph{hypergraphs}, tuples $C = \langle V_C, E_C, \lab_C, \att_C \rangle$ where $V_C$ and $E_C$ are finite sets of \emph{vertices} (or nodes) and \emph{hyperedges} respectively, $\lab_C : E_C \to \cB \cup X$ is an edge-label function recording the type of each edge, and $\att_C : E_C \to V^{\leq3}_C$ is an attachment function which assigns a non-empty string of vertices $\att_C(e) = v_0 , \dots , v_\ell$ to each hyperedge $e$ such that $|\att_C(e)| = 1 + \operatorname{arity}(\lab_C(e))$.  We call $v_0$ the \emph{result node} of $e$ and $v_1, \dots , v_\ell$ the \emph{argument nodes}, and denote them $\res(e)$ and $\arg(e)$ respectively.

The hypergraph $C$ is a \emph{term graph} if (1) there is a node $\mathrm{root}_G$ from which each node is reachable, (2) $G$ is acyclic -- there is no path where some node occurs twice, and (3) each node is the result node of a unique edge.  In this setting, hyperedges represent logic gates and vertices represent ``wires'' between gates --- essentially dual to the standard encoding of circuits as DAGs.


\section{Gate Elimination as a Convergent Term Graph Rewriting System}
\label{sec:tgrs-details}

An \emph{abstract rewriting system} is just a set of objects $A$ together with a binary relation $\rightarrow$ on $A$ called the \emph{rewrite} relation.  In this section, we will develop a system $\cS$ where $A$ consists of Boolean circuits over the DeMorgan basis and $C \rightarrow C'$ holds when $C$ simplifies to $C'$ via a single step of gate elimination.  To this end, first recall standard notions concerning abstract rewriting systems (see Definition 2.1.3 of \cite{BaaderN1998}).  For elements $a,a' \in A$, write $a \overset{*}{\rightarrow} a'$ to mean that there is a finite path of rewrite steps from $a$ to $a'$.

\begin{definition} The rewrite relation $\rightarrow$ is called
  \begin{description}
  \item[terminating] iff there is no infinite path $a_0 \rightarrow a_1 \rightarrow \dots$ ,
  \item[confluent] iff for every triple $a,b,b' \in A$, if $a \overset{*}{\rightarrow} b$ and $a \overset{*}{\rightarrow} b'$, then there exists $c \in A$ such that both $b \overset{*}{\rightarrow} c$ and $b' \overset{*}{\rightarrow} c$ , and
  \item[convergent] iff it is both confluent and terminating.
  \end{description}
\end{definition}

An object $a \in A$ is \emph{in normal form} if there is no $b$ such that $a \rightarrow b$.  Objects in normal form often enjoy nice structural properties, because they are ``simplest'' according to the rewrite relation.  For example, circuits $C$ in normal form according to our system $\cS$ will have:

\begin{itemize}
  \item No double negations.
  \item No constants unless $C$ computes a constant function --- e.g., $(x_i \lor \neg x_i)$ normalizes to 1.
  \item No identity gates --- sub-circuits $(\gamma \land \gamma)$ and $(\gamma \lor \gamma)$ do not occur.
\end{itemize}

Each rewrite rule of $\cS$ will (by construction) decrease or preserve the number of gates.  Consequently, it will be straightforward to show that (1) system $\cS$ is terminating and (2) given a circuit $C$ with bits $b = b_1 ,\dots, b_n$ substituted for all input variables, rewriting $C$ using $\cS$ until termination just evaluates $C$ on input $b$. These basic properties connect paths through $\cS$ to efficient circuit manipulation and evaluation. The work in showing that $\cS$ is ``well behaved'' goes into proving confluence.  We proceed along the following lines.

\begin{enumerate}
\item Identify a list of Boolean identities, $\cE_B$, which are sufficient for gate elimination arguments.
\item Use the Knuth-Bendix algorithm on $\cE_B$ to produce a convergent \emph{formula} simplification system $\cR_B$.
\item Lift $\cR_B$ to a convergent \emph{circuit} simplification system $\cS$ via Plump's account of term graph rewriting.
\end{enumerate}

This detailed treatment is for the DeMorgan basis ($\{\land, \lor, \neg\}$); the convergent system for $\{\land, \oplus, \neg\}$ follows similarly albeit with a different set of Boolean identities. 

\subsection{Boolean Identities}\label{sec:boolean-identities}

The identities collected in $\cE_B$ (Definition \ref{def:gate-elim-identities}) describe each step of circuit simplification in proofs by gate elimination.  Clearly, $\cE_B$ is valid for Boolean algebra: for all $g \in \{0, 1\}$, each identity is true when $\approx$ is interpreted as equality on the Boolean domain.  Therefore, consequences derived from $\cE_B$ via ``sound inference rules'' are true.  We do not treat that \emph{equational logic}\footnote{See Chapter 3 of \cite{BaaderN1998} or the exposition of Birkhoff's Theorem in \cite{Plaisted93}.} formally, because we transform $\cE_B$ into a convergent term rewriting system in the next subsection.

\begin{definition}[Useful Boolean Identities]
  \label{def:gate-elim-identities}
  We denote by $\cE_B$ the following set of identities:
  \begin{align*}
    \label{eq:gate-elim-eqn}
    1 \land 1 &\approx 1 & 1 \lor 1 &\approx 1 & \neg 1 &\approx 0   & g \land 1 &\approx g  & g \land 0 &\approx 0 & g \land \neg g &\approx 0  & g \land g &\approx g \\
    1 \land 0 &\approx 0 & 1 \lor 0 &\approx 1 & \neg 0 &\approx 1   & 1 \land g &\approx g  & 0 \land g &\approx 0 & \neg g \land g &\approx 0  & g \lor g &\approx g\\
    0 \land 1 &\approx 0 & 0 \lor 1 &\approx 1 & \neg \neg g &\approx g & g \lor 0 &\approx g  & g \lor 1 &\approx 1 & g \lor \neg g &\approx 1  &   & \\
    0 \land 0 &\approx 0 & 0 \lor 0 &\approx 0 &    &        & 0 \lor g &\approx g  & 1 \lor g &\approx 1 & \neg g \lor g &\approx 1 &   & \\
    \text{(tt }&\text{and)} & \text{(tt }&\text{or)} &\text{(tt }&\text{not)} & \text{(pass}&\text{ing)} & \text{(fix}&\text{ing)} & \text{(taut}&\text{ology)} & \text{(sim}&\text{plify)}
  \end{align*}
\end{definition}

Some Boolean identities are omitted from $\cE_B$, such as commutativity of $\land$ and $\lor$.  This is for two reasons: (1) including them would induce a rewriting system that is \textit{not convergent} and (2) commutativity identities do not ``simplify'' Boolean expressions -- the right hand sides do not have fewer Boolean operators than the left.  While $\cE_B$ does not fully characterize Boolean algebra, it is sufficient to capture the steps of gate elimination.  The identities $\cE_B$ are available in machine-readable form as supplementary material:\\ \url{https://feasible-math.org/2026-FSCD-80-supplementary.zip}

\subsection{Convergent Term Rewriting for Boolean Formulas}
\label{sec:term-rw}

Now we mechanically transform $\cE_B$  into an abstract term rewriting system on Boolean formulas and prove convergence.  Term rewriting is a classical special case of abstract rewriting systems, motivated by algebra, logic, and programming language theory.  \emph{Terms} are treelike expressions built up from function symbols, constants, and variables.  We will take an intermediate step through treelike expressions to get to DAG-like expressions: circuits.  Terms in general and DeMorgan formulas in particular are defined below, along with some auxiliary notions required to specify appropriate rewrite relations.

\begin{definition}[DeMorgan Formulas as Terms]
Let $\Sigma$ be a finite tuple of function and constant symbols with arities $\vec{d} \in \mathbb{N}^{|\Sigma|}$, and let $Z$ denote an infinite set of variables.  $\cT(\Sigma,Z)$ denotes the set of all terms over $\Sigma$ and $Z$, defined inductively:
  \begin{itemize}
  \item Every variable $z \in Z$ is a term.
  \item Every application of a function symbol $f_i \in \Sigma$ to $d_i$ terms $t_1, \dots , t_{d_i}$ of the form $f(t_1,\dots , t_{d_i})$ is a term.
  \end{itemize}
  Let $B = \langle \land , \lor , \neg , 0 , 1 \rangle$ with arities $\langle 2,2,1,0,0 \rangle$ be the \emph{DeMorgan signature}.  We denote by $F = \cT(B,X)$ where $X = \{g, x_1, x_2, \dots \}$ the set of \emph{DeMorgan formulas.}
\end{definition}

A \emph{substitution} $\sigma$ is a mapping between terms that may replace any finite number of variables with another term, but must leave constants and function applications fixed. So, can write substitutions as $\sigma = \{ x_i \mapsto t \}$. A \emph{term rewrite rule} is a pair of terms $\langle \ell, r \rangle$ written as $\ell \rightarrow r$, such that (1) $\ell$ is not a variable and (2) the set of variables in $r$ is a subset of the variables in $\ell$.  A \emph{term rewriting system} (TRS) over $\cT(\Sigma, Z)$ is a set $\cR$ of term rewrite rules where all pairs of terms are from $\cT(\Sigma, Z)$.  Finally, we have:

\begin{definition}[Term Rewriting, Definition 4.1 of \cite{Plump99}]
  The rewrite relation $\rightarrow_\cR$ on $\cT(\Sigma,Z)$ induced by a term rewriting system $\cR$ is defined as follows:\\
  \hspace{1cm} $t \rightarrow_{\cR} u$ if there is a rule $\ell \rightarrow r$ in $\cR$ and a substitution $\sigma$ such that
 \begin{enumerate}
 \item The left-hand side of the rule ``matches'' $t$ --- $\sigma(\ell)$ is a subterm of $t$
 \item The right-hand side ``generates'' $u$ --- $u$ is obtained from $t$ by replacing an occurence of $\sigma(\ell)$ by $\sigma(r)$
 \end{enumerate}
\end{definition}

We can now give the precise type of $\cE_B$: it is a set of pairs of terms from $F$. We aim to transform $\cE_B$ into a convergent term rewriting system $\cR_B$. Rather than manually casting our equations as term rewrite rules (e.g. $g \land 1 \approx g \Longrightarrow g \land 1 \rightarrow g)$ and then proving convergence from scratch, we use a well-known procedure designed to automate this: the Knuth-Bendix completion algorithm \cite{KNUTH1970,HUET1981}.

\begin{theorem}[Knuth-Bendix, \cite{SternagelZ12}]
  \label{thm:Knuth-Bendix}
  Given as input a set of identities $\cE$ over $\cT(\Sigma,Z)$, if Knuth-Bendix terminates, it outputs a convergent term rewriting system $\cR$ over $\cT(\Sigma,Z)$ with the same consequences as $\cE$.
\end{theorem}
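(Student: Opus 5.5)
This is the classical correctness theorem for (Knuth--Bendix) completion, cited from \cite{SternagelZ12}. The proof splits into an equational invariant and a convergence argument. We model completion as the standard inference system on pairs $(E_i, R_i)$ of pending equations and oriented rules, with a fixed reduction order $>$ (e.g.\ LPO or KBO). The inference rules are: \textsc{Orient}, which turns $s \approx t$ into $s \to t$ when $s > t$; \textsc{Deduce}, which adds critical pairs; \textsc{Delete}, which removes trivial equations; and \textsc{Simplify}, \textsc{Compose} and \textsc{Collapse}, which use rules to reduce sides. The algorithm starts from $(\cE, \emptyset)$ and terminates successfully when it reaches $(\emptyset, \cR)$.

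\textbf{Same consequences.} I would first show that each inference step preserves the equational theory, i.e.\ $\leftrightarrow^*_{E_i \cup R_i} = \leftrightarrow^*_{E_{i+1}\cup R_{i+1}}$. Each rule only adds equations derivable by replacement and transitivity from the current ones, so one direction is immediate. Conversely, each removed or rewritten equation can be rederived from what remains. By induction over the finite run, $\leftrightarrow^*_\cR = \leftrightarrow^*_\cE$. By Birkhoff's theorem, this is exactly the statement that $\cR$ has the same consequences as $\cE$.

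\textbf{Termination of $\to_\cR$.} Every rule ever created satisfies $\ell > r$. The order $>$ is a reduction order: well-founded, and closed under contexts and substitutions. Hence $\to_\cR \subseteq >$, and so $\to_\cR$ is terminating.

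\textbf{Confluence.} Since $\cR$ terminates, Newman's lemma reduces confluence to local confluence. By the Critical Pair Lemma, local confluence in turn reduces to joinability of every critical pair of $\cR$. The main obstacle is this step. The final $\cR$ is not simply the union of the rules ever generated, because \textsc{Collapse} and \textsc{Compose} delete or modify rules along the way. So one cannot just say ``every critical pair was added by \textsc{Deduce} and later oriented.''

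To handle this, I would use the proof-ordering technique of Bachmair--Dershowitz. Assign each equational proof a cost measured in the multiset extension of $>$. Then show two things:
\begin{enumerate}
\item Every inference step transforms proofs into proofs that are no larger.
\item Every non-rewrite proof (one containing a peak or an $E$-step) can be made strictly smaller, provided the run is fair, i.e.\ every persistent critical pair is eventually deduced.
\end{enumerate}
At the final state $E = \emptyset$ and all critical pairs of $\cR$ have been processed. Therefore every proof of $s \leftrightarrow^*_\cR t$ has a minimal form, and that minimal form is a valley proof $s \to^*_\cR \cdot \leftarrow^*_\cR t$. This gives confluence directly, and it also subsumes the critical-pair check.

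Combining termination and confluence, $\cR$ is convergent, which proves the theorem. In our application the hypothesis ``if Knuth--Bendix terminates'' is discharged concretely: running completion on $\cE_B$ halts, and the resulting $\cR_B$ can then be checked mechanically.
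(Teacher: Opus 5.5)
The paper does not prove this statement. It imports it from the completion literature as a black box. The certificate it supplies for Lemma~\ref{lem:simplify-DeMorgan-converge} checks properties of the output $\cR_B$ itself: termination, joinability of its finitely many critical pairs, and mutual derivability with $\cE_B$. That check needs only Newman's lemma, the critical pair lemma and Birkhoff's theorem, and never looks at the run.

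Your proposal instead proves the general, run-level theorem using Bachmair--Dershowitz proof orderings. This is the standard argument and it is correct. The invariance of $\leftrightarrow^*_{E_i\cup R_i}$, termination from $\ell>r$ for every rule ever created, and the elimination of peaks at the final state via fairness are all sound. Your route gives the statement for every successful run. The certificate route avoids having to verify anything about \textsc{Compose}, \textsc{Collapse} or fairness.

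You should make one detail explicit. Your claim that no inference step increases proof cost uses the usual costs: $(\{s\},\ell,t)$ for a step $s\to_{\ell\to r}t$, and $(\{s,t\},-,-)$ for an equational step. Under these costs the claim relies on the side condition of \textsc{Collapse}: the left-hand side of the collapsed rule must strictly encompass that of the collapsing rule.

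Without this condition the claim fails. Suppose you collapse $f(x)\to b$ by $f(x)\to a$ with $a>b$. Then every proof of $f(c)\approx b$ in the resulting system is strictly costlier than the original one-step proof $f(c)\to b$.

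For finite runs you can avoid this bookkeeping entirely. Each inference step replaces every step of a conversion by a conversion whose terms are all bounded by that step's endpoints. Take a critical pair $s\approx t$ of $\cR$ with peak $u$; fairness placed it in some $E_i$. It therefore ends up as an $\cR$-conversion whose terms all lie below $u$. The connected-below criterion for terminating systems then gives confluence, with no side condition on \textsc{Collapse} needed.
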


At a high level, Knuth-Bendix completion works by ensuring that overlapping pairs of rules do not create ambiguities.  Furthermore, the algorithm may add new rules and/or simplify rules to create a convergent system. Manually running the algorithm in our case would require checking ${\binom{25}{2}}$ pairs of equations --- although not every possible pair overlaps. While this is technically doable by hand, we instead use the open-source Knuth-Bendix Completion Visualizer software (KBCV,  \cite{SternagelZ12}) to run the algorithm on input $\cE_B$, yielding

\begin{definition}[Term Rewriting System $\cR_B$]
\label{def:GE-TRS}
  \begin{align*}
    0     &\to \neg 1 & g \land \neg1  &\to \neg 1 & g \land 1   &\to g  & g \land \neg g &\to \neg1  \\
    \neg\neg g  &\to g   & \neg 1 \land g &\to \neg 1 & 1 \land g   &\to g  & \neg g \land g &\to \neg1  \\
    g \land g &\to g   & g \lor 1   &\to 1   & g \lor \neg 1 &\to g  & g \lor \neg g &\to 1  \\
    g \lor g &\to g   & 1 \lor g   &\to 1   & \neg 1 \lor g &\to g  & \neg g \lor g &\to 1 \\
    \text{(norma}&\text{lizing)} & \text{(fix}&\text{ing)} & \text{(pass}&\text{ing)} & \text{(taut}&\text{ology)}
  \end{align*}
\end{definition}

\begin{lemma}
  \label{lem:simplify-DeMorgan-converge}
  $\cR_B$ is a convergent term rewriting system for the DeMorgan formulas, $F$.
\end{lemma}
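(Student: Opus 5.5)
Since $\cR_B$ was produced by running Knuth--Bendix completion (in KBCV) on $\cE_B$, the quickest route is to appeal to Theorem~\ref{thm:Knuth-Bendix}: the procedure terminated, so its output is convergent. A proof that does not depend on trusting the tool should, however, check the two halves of convergence directly, termination and local confluence, and then apply Newman's Lemma (a terminating, locally confluent system is confluent). This is the plan I will follow.

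\textbf{Termination.} Every rule except $0 \to \neg 1$ strictly decreases the number of function-symbol occurrences, i.e.\ the number of nodes of the term tree. For example, $g \land \neg 1 \to \neg 1$ and $\neg\neg g \to g$ both remove symbols, and $g \land \neg g \to \neg 1$ goes from at least four symbols to two. The rule $0 \to \neg 1$ increases size, so I will use a lexicographic measure instead: first the number of binary and negation symbols (weighting $\neg$ appropriately), then the number of occurrences of $0$.
\begin{itemize}
\item $0 \to \neg 1$ adds one $\neg$. To handle this, I will weight $0$ more heavily than $\neg 1$.
\item A cleaner choice is the polynomial interpretation $[0]=3$, $[1]=1$, $[\neg](x)=x+1$, $[\land](x,y)=[\lor](x,y)=x+y+1$, over naturals $\ge 1$.
\end{itemize}
With this interpretation I will check each rule's left side against its right side:
\begin{itemize}
\item $[0]=3>[\neg 1]=2$.
\item $[\neg\neg g]=g+2>g$.
\item $[g\land g]=2g+1>g$.
\item $[g\land\neg g]=2g+2>2$.
\item $[g\lor 1]=g+2>1$.
\item $[g\lor\neg 1]=g+3>g$.
\end{itemize}
The remaining rules are symmetric variants of these. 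The interpretations are monotone and stable under substitution, so termination follows.

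\textbf{Local confluence.} By the Critical Pair Lemma, it suffices to show that every critical pair is joinable. I will enumerate overlaps of left-hand sides at non-variable positions. The rule $0\to\neg1$ has no overlaps, since $0$ does not occur in any other left-hand side. Typical cases to join include:
\begin{itemize}
\item \emph{$\neg\neg g$ against $g\land\neg g$.} The term $\neg g\land\neg\neg g$ rewrites to $\neg 1$ by tautology, and to $\neg g\land g$ by normalizing, which then rewrites to $\neg 1$.
\item \emph{Instantiating $g\land g$ with $g=1$.} The term $1\land 1$ rewrites to $1$ by both rules.
\item \emph{$g\land\neg 1$ against $\neg 1\land g$ at the root.} The term $\neg1\land\neg1$ gives $\neg1$ either way.
\item \emph{$g\land 1$ against $g\land\neg g$ with $g=\neg 1$.} The term $\neg1\land\neg\neg1$ rewrites to $\neg 1$ via tautology, and via normalizing to $\neg1\land1$, which rewrites to $\neg1$.
\end{itemize}
The main obstacle is bookkeeping: making sure the list of overlaps is exhaustive, especially those involving $\neg\neg g$ inside tautology patterns and the self-overlaps of the idempotence rules. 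I will organize the enumeration by the root symbol of each left side ($\land$, $\lor$, $\neg$) and handle $\lor$ by the duality with $\land$ that the rule set respects. This roughly halves the work, and the KBCV certificate can be cited as a cross-check.

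Newman's Lemma then yields confluence, and together with termination this gives convergence of $\cR_B$ on $F$.
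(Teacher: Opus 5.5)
Your proof is correct, but apart from your opening sentence it takes a different route from the paper. The paper's entire proof is that opening appeal. KBCV's completion run on $\cE_B$ terminated with output $\cR_B$, so by the Knuth--Bendix theorem $\cR_B$ is convergent and has the same equational consequences as $\cE_B$. This is backed by a CPF certificate that CeTA can check.

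You instead verify convergence by hand, and the plan works:
\begin{itemize}
\item Your polynomial interpretation is a valid strictly monotone termination order. The choice $[0]=3$ absorbs the size-increasing rule $0\to\neg 1$, and the domain $\ge 1$ is genuinely needed for $g\land\neg g\to\neg 1$.
\item The roughly thirty critical pairs are all joinable. These are $\neg\neg g$ overlapping itself and the $\neg g$ positions of the four tautology rules, plus the root overlaps among the $\land$-rules and among the $\lor$-rules.
\item The Critical Pair Lemma and Newman's Lemma then finish the argument.
\end{itemize}

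The paper's route buys brevity, a machine-checked guarantee, and the extra fact that $\cR_B$ and $\cE_B$ have the same consequences. That fact is not needed for the lemma as stated, but it is what justifies that $\cR_B$ still captures every step of $\cE_B$. Yours buys a self-contained, tool-independent argument with an explicit termination order.

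Two small repairs are needed.

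First, your fourth example is mislabelled. The left sides $g\land 1$ and $g\land\neg g$ do not unify ($1$ versus $\neg g$). The term $\neg 1\land\neg\neg 1$ actually arises from $\neg 1\land g\to\neg 1$ against $g\land\neg g\to\neg 1$, or as an instance of your first example.

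Second, the $\land/\lor$ ``duality'' is not a syntactic symmetry of $\cR_B$, because $0$ is encoded as the compound term $\neg 1$. So it cannot be used to halve the enumeration:
\begin{itemize}
\item $g\land\neg 1$ overlaps $g\land\neg g$ on $1\land\neg 1$, but $g\lor 1$ does not unify with $g\lor\neg g$.
\item $g\lor\neg 1$ against $\neg g\lor g$ yields $\neg\neg 1\lor\neg 1$, which needs an extra $\neg\neg g\to g$ step to join. Its would-be $\land$-dual, $\neg 1\land 1$, joins in one step.
\end{itemize}
Enumerate the $\lor$ overlaps directly; they all join.
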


\begin{proof}
The Knuth-Bendix algorithm run on input $\cE_B$ terminated and printed the system $\cR_B$ listed in Definition \ref{def:GE-TRS} above.  Therefore, $\cR_B$ is convergent and has the same consequences as $\cE_B$. A machine-verifiable proof has been included in the supplementary material in certification problem format (CPF) which can be verified using tools like \textsf{C\kern-0.15exe\kern-0.45exT\kern-0.45exA} \cite{SternagelT13}. KBCV outputs its proofs in CPF 2 while \textsf{C\kern-0.15exe\kern-0.45exT\kern-0.45exA} requires CPF 3. We have converted the output to CPF 3 using the tool at \url{http://cl-informatik.uibk.ac.at/software/ceta/cpf3.tgz}.
\end{proof}

The rules of $\cR_B$ are grouped based on their structure and impact on a formula.  Observe that $\cR_B$ is a smaller set than $\cE_B$, as Knuth-Bendix has made a few simplifications including removal of truth-table identities.  The unique added rule, $0 \rightarrow \neg 1$, stands out. This is the only rewrite rule in the system $\cR_B$ that increases the number of Boolean operators in a formula.  This is compatible with gate elimination because: (1) the formulas become simpler in the sense that after rewriting $0$ to $\neg 1$  only a single type of constant occurs, and (2) substituting a variable for $0$ can be accomplished by a single additional step of rewriting to $\neg 1$.  The term rewriting rules $\cR_B$ are available in machine-readable form as part of supplementary material.

All that remains is lifting this rewriting system for \textit{formulas} to one for \textit{circuits.}

\subsection{Convergent Term Graph Rewriting for Boolean Circuits}
\label{sec:term-graph-rw}

Recalling Plump's definitions, we can lift our term rewriting system to a term \emph{graph} rewriting system by generalizing the notions of pattern matching and generation as follows \cite{Plump99}.

\begin{definition}[Hypergraph Morphism]
  \label{def:morphism}
  For hypergraphs $G$ and $H$, a \emph{hypergraph morphism} $f: G \to H$ is a pair of functions $f_E : E_G \to E_H$ and $f_V : V_G \to V_H$ that preserve
   \begin{description}
   \item[labels,] so $f_E$ maps each edge $\gamma$ of $G$ to an edge of $H$ with matching label:\\ $ \forall \gamma \in E_G ~\lab_G(\gamma) = \lab_H(f_E(\gamma))$ and
   \item[attachments,] so for every edge $\gamma$ of $G$, $f_V$ is an order-preserving map from the vertices attached to $\gamma$ to the vertices attached to $f_E(\gamma)$ in $H$ --- recalling that $\att(\gamma)$ is a string:\\
    \( \forall \gamma \in E_G ~\forall i \in |\att_G(\gamma) |
     ~~f_{V}(\att_G(\gamma))[i] = \att_H(f_E(\gamma))[i] . \)
   \end{description}
 \end{definition}

 Term graphs can be transformed into terms in the natural way: by ``unrolling'' term graphs so that each edge is read only once.

 \begin{definition}[Term representation of a term graph (3.2 in \cite{Plump99})]
   A node $v$ in a term graph $G$ represents the term
   \(
     \mathrm{term}_G(v) = \lab_G(e)(\mathrm{term}_G(v_1) , \dots , \mathrm{term}_G(v_\ell))
   \), where $e$ is the unique edge with $\res(e) = v$ and where $\arg(v) = v_1 , \dots v_\ell$.  Denote $\mathrm{term}_G(\mathrm{root}_G)$ by $\mathrm{term}(G)$.
 \end{definition}

 Similarly, terms $t$ can be transformed into term graphs by encoding each application of a function symbol in $t$ by a hyper-edge; the resulting hypergraph is essentially the parse tree of $t$.  To formulate pattern matching, Plump extended this simple idea to erase specific variables: for a term $t$, define $\diamond t$ as the parse tree of $t$ encoded by a hypergraph, with all repeated variables collapsed into ``open'' vertices --- that is, the edge $x_i$ is deleted for each $x_i$, but the unique result vertex remains and is referenced by every edge that was attached to $x_i$ in the parse tree.  A term graph $L$ is an \emph{instance} of a term $l$ if there is a graph morphism $\diamond l \to L$ that sends the root of $\diamond l$ to the root of $L$. Given a node $v$ in a term graph $G$ and a term rewrite rule $r \rightarrow \ell$, the pair $\langle v, \ell \to r\rangle$ is a \emph{redux} if the subgraph of $G$ reachable from $v$ (denoted $G[v]$) is an instance of $\ell$.  Finally, we recall a single step of Plump's graph rewriting: essentially, a subgraph matching the left hand side of a rule is sliced out and replaced with the right-hand side.

\begin{definition}[Term Graph Rewriting (Definition 1.4.5 of \cite{Plump99})]
  \label{def:tgrs}
  Let $G$ be a term graph containing a redux $\langle v, l \rightarrow r \rangle$.  There is a  \emph{proper rewrite step} from $G$ to $H$ where $H$ is constructed by
  \begin{enumerate}
  \item $G_1 \gets G - \{e\}$ where $e$ is the unique edge that satisfies $\mathrm{res}(e) = v$
  \item $G_2 \gets$ the disjoint union of $G_1$ with $\diamond r$ where
    \begin{itemize}
    \item $v$ is identified with $\mathrm{root}(\diamond r)$
    \item Every edge labeled with a variable in $\diamond r$ is identified according to the morphism that matched $\ell$ to $G$.
    \end{itemize}
  \item Garbage collection: $H$ is obtained from $G_2$ by deleting all nodes and edges not reachable from the root.
  \end{enumerate}
\end{definition}

We will consider rewriting up to two different notions of equivalence for hypergraphs.  One is hypergraph isomorphism, the other is

\begin{definition}[Bisimilarity]
Two term graphs $G$ and $H$ are \emph{bisimilar}, denoted $G \sim H$, if $\mathrm{term}(G) = \mathrm{term}(H)$.
\end{definition}

Crucially for our applications, two bisimilar circuits compute the same function. Thus, though weaker than isomorphism, bisimilarity is still valuable in the circuit complexity setting: it implies \emph{functional} equivalence.  Plump showed that under both notions of equivalence, convergent term rewriting can be used to manufacture convergent term \emph{graph} rewriting.

\begin{theorem}[Corollary 1.7.4 and Theorem 1.7.12 of \cite{Plump99}]
  \label{thm:ckt-simp-systems}
  If $\cR$ is a convergent term rewriting system, then 
  \begin{enumerate}
  \item  the term graph rewriting system consisting of proper rewrite steps induced by $\cR$ plus the \emph{\textbf{collapse rule}} is convergent up to hypergraph isomorphism, and
  \item the term graph rewriting system consisting only of proper rewrite steps \emph{\textbf{(plain rewriting)}} induced by $\cR$ is convergent up to bisimilairty classes of hypergraphs.
  \end{enumerate}
\end{theorem}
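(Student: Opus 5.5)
This theorem is imported from Plump's monograph; the plan is to recall the structure of his argument rather than rebuild term graph rewriting from scratch.

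The argument splits into termination and confluence, handled separately for each of the two systems.

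\textbf{Termination.} We transfer termination from $\cR$ to graphs through the unravelling map $\mathrm{term}(\cdot)$. Each proper rewrite step $G \Rightarrow H$ induces a nonempty finite sequence of term rewrite steps $\mathrm{term}(G) \overset{+}{\rightarrow}_\cR \mathrm{term}(H)$. This is because the redex node $v$ may be shared, so its unravelling contains several copies of $\sigma(\ell)$, and all of them are contracted simultaneously. A collapse step $G \Rightarrow H$, which identifies two nodes representing the same term, leaves $\mathrm{term}(G)=\mathrm{term}(H)$ unchanged. Collapse strictly decreases the number of nodes, so no infinite sequence can consist of collapse steps alone. Hence an infinite derivation would contain infinitely many proper steps, and would project to an infinite $\cR$-derivation, contradicting termination of $\cR$.

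\textbf{Confluence up to bisimilarity (part 2).} It suffices to show that any two normal forms reachable from the same $G$ are bisimilar. The key lemma is that $G$ is in normal form for plain rewriting if and only if $\mathrm{term}(G)$ is an $\cR$-normal form. Every redex in the term lies under some node of the graph whose subgraph is an instance of the left-hand side, and conversely. Given this lemma, if $G \overset{*}{\Rightarrow} H_1$ and $G \overset{*}{\Rightarrow} H_2$ with both $H_i$ in normal form, then $\mathrm{term}(H_1)$ and $\mathrm{term}(H_2)$ are both $\cR$-normal forms of $\mathrm{term}(G)$. Convergence of $\cR$ makes them equal, so $H_1 \sim H_2$. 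Termination plus uniqueness of normal forms then yields confluence modulo $\sim$, via Newman's lemma in its modulo form.

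\textbf{Confluence up to isomorphism (part 1).} With collapsing included, a graph in normal form is fully collapsed: no two distinct nodes represent the same term, so the graph is maximally shared. A maximally shared term graph is determined up to isomorphism by its term. Any two normal forms reachable from $G$ have $\cR$-normal terms, by the same argument as in part 2, and these terms are equal by convergence of $\cR$. Both normal forms are maximally shared, hence they are isomorphic. Confluence follows from termination and uniqueness of normal forms.

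\textbf{Main obstacle.} The key lemma relating graph redexes to term redexes carries most of the weight. The delicate point is non-left-linear rules such as $g\land g\to g$ and $g\land\neg g\to\neg 1$. A graph with two distinct but bisimilar nodes under an $\land$ gives a term redex but no graph redex, because $\diamond\ell$ forces the two occurrences of $g$ to be the same node.

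In part 1 the collapse rule removes this problem by first merging the two nodes. In part 2 this is exactly the situation Plump treats in his Theorem 1.7.12. The lemma must therefore be stated for graphs where such ``hidden'' redexes are tolerated, or the notion of instance must be read modulo bisimilarity. I would check carefully that the normal-form correspondence survives there. If it does not, I would restrict the claim to the left-linear fragment plus collapse, and rely on part 1.
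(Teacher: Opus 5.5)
The paper gives no argument of its own here; it imports the result from Plump. Your termination argument is sound, and so is your part~1. Proper steps project to $\to_\cR^{+}$, and collapse preserves the term while removing nodes. A graph irreducible under collapse is fully collapsed, so every redex of its term is exposed as a graph redex, and the graph is determined up to isomorphism by its term.

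The gap is in part~2. Your key lemma is false in the direction you need: a graph can be irreducible under plain rewriting while its term is not $\cR$-normal. Already in $\cR_B$, the graph for $\neg x_1\land\neg x_1$ built from two separate $\neg$-edges is irreducible, although $g\land g\to g$ applies to its term. So you cannot conclude that $\mathrm{term}(H_1)$ and $\mathrm{term}(H_2)$ are $\cR$-normal. Your closing paragraph names this problem but does not resolve it.

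For plain rewriting as a relation on graphs, it cannot be resolved without further hypotheses. The system $\{F(y)\to M(y),\ K(y)\to M(y),\ H(y,y)\to c,\ H(F(y),K(y))\to c\}$ is convergent:
\begin{itemize}
\item It terminates, because each rule lowers the number of $F$, $K$, $H$ symbols without duplicating variables.
\item Both critical pairs are joinable via $H(M(y),M(y))\to c$.
\end{itemize}
Now start from the graph $H(F(a),K(a))$ with a single shared $a$-node.
\begin{itemize}
\item Rewriting at the root gives $c$.
\item Rewriting the $F$- and $K$-nodes first gives $H(M(a),M(a))$ with two distinct $M$-nodes. This graph is irreducible, since $\diamond H(y,y)$ forces both arguments onto one node.
\end{itemize}
These two normal forms of the same graph are not bisimilar.

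Your lemma, and with it the rest of your argument, does hold in two settings:
\begin{itemize}
\item When $\cR$ is left-linear. Then $\diamond\ell$ is a tree and maps into any node whose term is an instance of $\ell$.
\item When the rewrite relation is taken on bisimilarity classes, i.e.\ $[G]\to[H]$ iff some $G'\sim G$ properly rewrites to some $H'\sim H$. Here the fully collapsed representative of a class exposes every term redex, which is the fact you already use in part~1.
\end{itemize}
The first setting does not cover the intended application, since $\cR_B$ contains $g\land g\to g$ and $g\land\neg g\to\neg 1$. Your fallback of restricting to left-linear rules plus collapse proves a different statement. So statement~(2) must be read in the second, class-level sense, and your proposal does not commit to or verify that reading.
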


\emph{Collapse} is an additional term graph rewriting rule that allows us to merge two rooted subhypergraphs if there exists a root-preserving hypergraph morphism between them.  For circuits, this operation would correspond to finding redundant subcircuits and combining them into one.  This is natural simplification step; indeed, if our goal was to optimize circuits then any system missing this rule would be insufficient.  In gate elimination proofs the collapse rule is largely irrelevant -- arguments start with sub-optimal circuits and simplify them until a contradiction is obtained to prove circuit complexity lower bounds.

However, it is still desirable to work with collapse because it provides a strong guarantee: circuit simplifications act identically on \emph{isomorphism classes} of circuits.  When aiming for structural characterizations which should hold up to isomorphism, this is the natural choice.  Unfortunately, it is not clear if the collapse rule can be implemented efficiently.  Therefore, we use the plain rewriting system (without collapse rule) when constructive arguments are required (e.g., Theorem  \ref{thm:xor-refuter}).

Applying this lifting theorem to our term rewriting system $\cR_B$ for simplification of DeMorgan formulas yields systems for gate elimination.

\begin{theorem}
\label{thm:GE-confluent}
$\cR_B$ induces two term graph rewriting systems for DeMorgan \emph{circuits:}
\begin{enumerate}
\item system $\cS$ includes the collapse rule and is convergent up to circuit isomorphism, and
\item system $\cS'$ does not include the collapse rule and is convergent up to circuit bisimulation.
\end{enumerate}
\end{theorem}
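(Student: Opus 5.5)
The proof is a direct instantiation of Theorem~\ref{thm:ckt-simp-systems} with $\cR = \cR_B$. By Lemma~\ref{lem:simplify-DeMorgan-converge}, $\cR_B$ is a convergent term rewriting system over $F = \cT(B,X)$. Part~(1) of Theorem~\ref{thm:ckt-simp-systems} then gives that proper rewrite steps induced by $\cR_B$, together with collapse, form a system convergent up to hypergraph isomorphism; call this $\cS$. Part~(2) gives that plain rewriting alone is convergent up to bisimilarity; call this $\cS'$.

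What remains is to check that the objects these theorems range over are exactly DeMorgan circuits, and that ``isomorphism'' and ``bisimilarity'' of term graphs mean what we want for circuits. I would carry this out in three steps.

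\textbf{Step 1: circuits are term graphs.} Using the hypergraph encoding of Section~\ref{sec:prelims}, a DeMorgan circuit on inputs $x_1,\dots,x_n$ with one output is a term graph over the signature $B$ with nullary variable labels $x_i$. Its root is the output node, acyclicity is inherited from the circuit, and each node is the result of exactly one gate or input edge. Input nodes unreachable from the output are removed by garbage collection anyway.

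\textbf{Step 2: rewriting stays inside the class.} A proper rewrite step with a rule of $\cR_B$ replaces a matched instance of $\ell$ by $\diamond r$. Each right-hand side uses only symbols of $B$ and variables already occurring in $\ell$. Hence the result is again a term graph over $B \cup X$, i.e.\ a DeMorgan circuit. Collapse only merges nodes, so it also preserves this class. In particular, the rule $0 \to \neg 1$ is harmless here.

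\textbf{Step 3: the equivalences are the intended ones.} Hypergraph isomorphism of these term graphs is precisely circuit isomorphism. Bisimilarity, meaning equality of unravelled terms, is what the statement calls circuit bisimulation, and it implies functional equivalence.

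The main obstacle I expect is a hidden technical side condition in Plump's theorems: the circuit variables $x_i$ are edge labels, whereas rule variables such as $g$ become open vertices in $\diamond \ell$. I must check that Plump's framework allows term graphs with variable-labelled leaves. I must also check that the rule variable $g$ can be instantiated by any subgraph, including one rooted at an input $x_i$.

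Convergence of $\cR_B$ over $\cT(B,X)$, where $X$ contains both $g$ and the $x_i$, already covers ground instances. The term graph lifting in Plump's Chapter~1 is stated for arbitrary terms over a signature with variables. So I would argue that treating each $x_i$ as a nullary symbol, i.e.\ a constant of an extended signature, preserves convergence of $\cR_B$: no rule's left-hand side mentions $x_i$, so no new critical pairs arise, and termination is unaffected. This reduces the claim exactly to the hypotheses of Theorem~\ref{thm:ckt-simp-systems}.
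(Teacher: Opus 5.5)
Your proposal is correct and takes the same route as the paper: the paper proves Theorem~\ref{thm:GE-confluent} simply by instantiating the lifting result, Theorem~\ref{thm:ckt-simp-systems}, with the convergent system $\cR_B$ from Lemma~\ref{lem:simplify-DeMorgan-converge}. Your Steps 1--3 spell out side conditions the paper leaves implicit, and your final worry about the $x_i$ is already settled by the paper's setup: it takes $X = \{g, x_1, x_2, \dots\}$ as variables and lets hyperedges carry labels in $\cB \cup X$, so no reinterpretation of the inputs as constants is needed.
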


\subsection{Explicit Circuit Simplification Systems}
\label{sec:ckt-simplification-system-compiled}


In this section we ``compile'' the systems $\cS$ and $\cS'$ to give a concrete description of DeMorgan circuit simplification, complementing the entirely abstract statement of Theorem \ref{thm:GE-confluent}.

There are three parts to $\cS$: (1) notions of redundancy and pattern matching for sub-circuits (Definitions \ref{def:ckt-collapse} and \ref{def:redux}),  (2) a set of circuit rewrite rules given as pairs of patterns (Figures \ref{fig:normalizing-rules} -- \ref{fig:tautology-rules} in Appendix \ref{sec:gate-elim-rules}), and (3) the procedure for pattern-substitution in a circuit (Algorithm \ref{algo:ckt-simp-system}).  System $\cS$ is then the binary relation on circuits induced by setting $C \rightarrow C'$ when either (a) $C$ matches the left-hand side $l$ of a rule $\langle l \mapsto r \rangle$ and $C'$ is the result of substituting pattern $r$ for $l$ in $C$, or (b) $C$ has a redundant sub-circuit that ``collapses'' in $C'$.  System $\cS'$ is identical except for omission of the collapse rule.
 
 \begin{definition}[Collapsing Redundant Sub-Circuits]
   \label{def:ckt-collapse}
   Circuit $C$ \emph{collapses} to circuit $D$ if there is a non-injective hypergraph morphism $C \mapsto D$ mapping $\textrm{root}_C$ to $\textrm{root}_D$, denoted $C \succ D$.
 \end{definition}
 
 \begin{definition}[Pattern \& Redux in Circuits]
   \label{def:redux}
   A \emph{pattern} is just a circuit $L$ where vertices $\gamma$ without a unique gate $g$ such that $\att_L(g)[0] = \gamma$ may occur; we call these \emph{open vertices}.  Circuit $D$ is then an \emph{instance} of pattern $L$ if there is a morphism $p: L \to D$ sending $\textrm{root}_L$ to $\textrm{root}_D$.  Then, given a vertex $\alpha$ in circuit $C$ and a rule $L \mapsto R$, the pair $\langle\alpha, L \mapsto R \rangle$ is a \emph{redex} if the sub-circuit of $C$ rooted at $\alpha$ (denoted $C[\alpha]$) is an instance of $L$.
\end{definition}

Most rewrite rules of $\cS$ use the open vertex labeled $\gamma$ to match ``the rest of the sub-circuit $D$.''  That is, a morphism $p$ will send $\gamma$ to each node of $D$ not explicitly mentioned to ``match'' the pattern.  Each family of rules then plays a different role in gate elimination.
\begin{description}
\item[Normalizing] enforces that each circuit does not contain ``duplicate'' gates --- such as double negation or trivial identity.
\item[Fixing] applies when a gate computes a constant function because of one input.  
\item[Passing] applies when a gate computes the identity function because of one input.
\item[Tautology] removes redundant (costly) gates which trivially compute constants. 
\end{description}

Distinct rules are required for fixing, passing, and tautology to handle left and right inputs because inclusion of ``AND is commutative'' and ``OR is commutative'' as rewrite rules makes confluence impossible \cite{Socher-Ambrosius91}.  Finally, we describe an \emph{abstract} proper rewrite step (Definition \ref{def:tgrs}) as a \emph{concrete} operation that transforms circuits (Algorithm \ref{algo:ckt-simp-system}).

\begin{figure}[t]
  \centering
  \begin{subfigure}[b]{0.23\textwidth}
    \includegraphics[width=\textwidth]{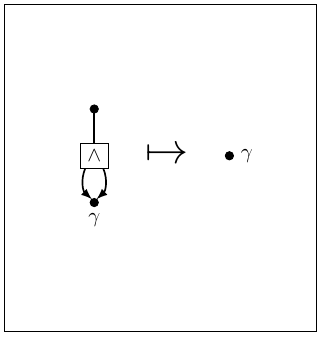}
    \caption{AND De-duplication}
  \end{subfigure}
  \begin{subfigure}[b]{0.23\textwidth}
    \centering
    \includegraphics[width=\textwidth]{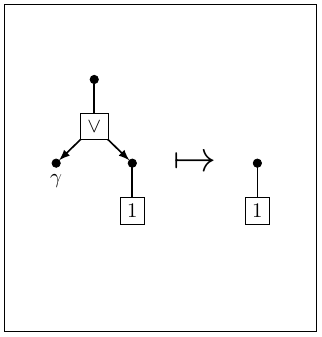}
    \caption{OR Fixing (left)}
  \end{subfigure}
  \begin{subfigure}[b]{0.23\textwidth}
    \centering
    \includegraphics[width=\textwidth]{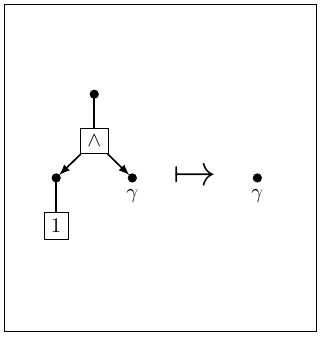}
    \caption{AND Passing (right)}
  \end{subfigure}
  \begin{subfigure}[b]{0.23\textwidth}
    \centering
    \includegraphics[width=\textwidth]{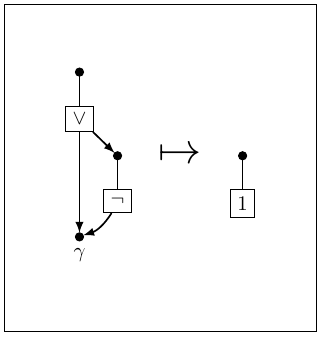}
    \caption{OR Tautology (left)}
  \end{subfigure}
  \caption{Selection of proper circuit rewriting steps, see Figures \ref{fig:normalizing-rules} -- \ref{fig:tautology-rules} in Appendix \ref{sec:gate-elim-rules} for the complete system.}
  \label{fig:selection-of-rules}
\end{figure}

\begin{algorithm}[H]
  \caption{Step of Circuit Simplification System $\cS$, defining $C \to C'$}
  \label{algo:ckt-simp-system}
  \begin{algorithmic}
    \Require $C$ is a circuit containing the redex $\langle \alpha, R \mapsto L \rangle$, with $p: R \to C[\alpha]$ the witnessing morphism
    \State $C_1 \gets C - \{ g \}$ where $g = \operatorname{res}^{-1}(\alpha)$
    \Comment{Remove the unique gate with output wire $\alpha$}
    \State $C_2 \gets C_1 + R$
    \Comment{Disjoint union: rhs of the matched rule with $C$}
    \State $C_3 \gets $ Identify vertex $\alpha$ with $\textrm{root}_R$ of $C_2$
    \Comment{Connect $R$ to the appropriate element(s) of $C$}
    \If{$\gamma \in R$}
      \Comment{Does $R$ reuse a subcircuit?}
      \State $C_4 \gets $ Identify vertex $p(\gamma)$ with $\gamma$ of $C_3$
      \Comment{Yes; wire the matching element of $R$ into $C$}
    \Else
      \State $C_4 \gets C_3$
      \Comment{$R$ does not reuse any subcircuits --- do nothing}
    \EndIf
    \State $C' \gets $ Garbage collection: remove all vertices and edges unreachable from $\mathrm{root}_{C_4}$
  \end{algorithmic}
\end{algorithm}


\section{Simplification of \texorpdfstring{$\cU_2$}{U2} and \texorpdfstring{$\cB_2$}{B2} Circuits is Not Confluent}
\label{sec:u2-ge-not-confleunt}

In this section we demonstrate that any similar formulation of gate elimination in $\cU_2$ is \textbf{\underline{not}} confluent. As $\cU_2 \subset \cB_2$, the same argument shows gate elimination in $\cB_2$ is not confluent. To do this we produce a circuit and substitution which, after applying a different series of rewrites, results in two distinct circuits (i.e. non-isomorphic). We recall the 14 binary operators that make up $\cU_2$ in Table \ref{tab:u2}.

\begin{table}[h]
\centering
\caption{The $\cU_2$ basis: all the binary Boolean operations except $\oplus$ and its negation $\odot$}
\label{tab:u2}
\begin{tabular}{|cc|cccccccccccccc|}
\hline
$p$ & $q$ & $\ast_1$ & $\ast_2$ & $\ast_3$ & $\ast_4$ & $\ast_5$ & $\ast_6$ & $\ast_7$ & $\ast_8$ & $\ast_9$ & $\ast_{10}$ & $\ast_{11}$ & $\ast_{12}$ & $\ast_{13}$ & $\ast_{14}$ \\ \hline
T & T & T     & F     & T     & F     & T     & F     & T     & F     & T     & F        & T        & F        & T        & F        \\
T & F & T     & F     & T     & F     & F     & T     & T     & F     & F     & T        & F        & T        & T        & F        \\
F & T & T     & F     & F     & T     & T     & F     & F     & T     & T     & F        & F        & T        & T        & F        \\
F & F & T     & F     & F     & T     & F     & T     & T     & F     & T     & F        & F        & T        & F        & T        \\ \hline
\end{tabular}
\end{table}

Gates $\ast_4$ and $\ast_6$ compute the negation of their first and second inputs respectively. If these are not the sole gate in the circuit, then they are superfluous. We can remove them by relabeling either their input gates or every gate that is read by them. We demonstrate the two ways to ``push'' these superfluous negations in Figure \ref{fig:pushing-example}.

\begin{figure}[H]
    \centering
    \begin{subfigure}[b]{0.31\textwidth}
       \centering
       \includegraphics[scale=0.75]{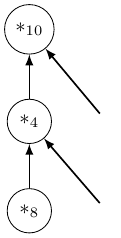}
       \caption{$\ast_4$ can be removed two ways}
    \end{subfigure}
    \begin{subfigure}[b]{0.31\textwidth}
       \centering
       \includegraphics[scale=0.75]{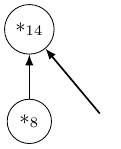}
       \caption{Pushing up: $\ast_{10} \to \ast_{14}$}
    \end{subfigure}
        \begin{subfigure}[b]{0.31\textwidth}
       \centering
       \includegraphics[scale=0.75]{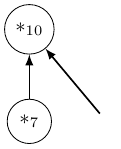}
       \caption{Pushing down: $\ast_{8} \to \ast_7$}
       \end{subfigure}
       \caption{Two simplifications for $\ast_4$, disproving convergence of gate elimination over $\cU_2$ and $\cB_2$}
       \label{fig:pushing-example}
\end{figure}

If both of these rewrite rules are present in our system then, as the example shows, is not confluent. Unfortunately, we \textbf{must} include rewrite rules that both push negations up and down: if we do not (and choose to always push negations in one direction), then superfluous $\ast_4$ and $\ast_6$ gates can survive at the extremes of the circuit. Furthermore, this issue cannot be sidestepped even if we limit ourselves to optimal circuits. This is because ``passing'' rules in this rewriting system must generate intermediate negation gates.

For example, $p \ast_8 q$ is equivalent to $\neg p \land q$; if we wanted to substitute $q \leftarrow 1$ we would like the equivalent of a ``passing'' rule to be applied. However, the output wires of $\ast_8$ should pass to $\neg p$, which is not an actual gate in the circuit. Thus, rather than removing $\ast_8$, this substitution will have to relabel $\ast_8$ to $\ast_4$. Then $\ast_4$ can be eliminated by ``pushing'' it either up or down leading to two non-isomorphic circuits. Again, we cannot always ``push down'' as $p$ could be an input and not a binary gate. From these examples, we see that

\begin{observation}\label{obs:u2-b2-not-confluent}
    Gate elimination is not confluent in the $\cU_2$ or $\cB_2$ bases, even when restricted to optimal circuits.
\end{observation}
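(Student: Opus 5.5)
The plan is to exhibit one concrete counterexample to confluence. Fix the natural family of rewrite rules any $\cU_2$ simplification system must contain, and find a single small circuit together with a constant substitution from which two rewrite sequences end in distinct normal forms that are neither isomorphic nor bisimilar. Since $\cU_2 \subset \cB_2$ and every $\cB_2$ rule set restricted to $\cU_2$-circuits must contain the same rules, the same witness settles $\cB_2$.

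First I would argue that any reasonable gate elimination system over $\cU_2$ must include both ``push-up'' and ``push-down'' rules for the projection-negation gates $\ast_4$ and $\ast_6$. The argument runs as follows.

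\begin{itemize}
\item \emph{Passing substitutions create these gates.} Take $p \ast_8 q \equiv \neg p \land q$ and substitute $q \leftarrow 1$. No node of the circuit carries $\neg p$, so a gate-count-nonincreasing rule can only relabel $\ast_8$ to $\ast_4$ (or $\ast_6$, by symmetry).
\item \emph{Both directions are forced.} If a system only pushes down (absorbing the negation into the gate's inputs), an $\ast_4$ reading a variable $x_i$ survives, because variables cannot be relabeled. If a system only pushes up (absorbing the negation into readers), an $\ast_4$ at the output gate survives. In either case normal forms would contain superfluous gates, so any adequate system has both rule types.
\end{itemize}

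Next I would build the witness of Figure~\ref{fig:pushing-example}. Let $a = u \ast_{10} v$ be a binary gate whose output is read only by $b = a \ast_4 w$, where $\ast_4$ negates its first input, and let $b$ feed $c = b \ast_8 z$. Make $u, v, w, z$ distinct inputs so that no other rule applies anywhere.

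\begin{itemize}
\item \emph{Pushing $b$ up.} Relabel $a$ to its negation, $\ast_{10} \to \ast_{14}$, and delete $b$. The resulting circuit has gates labeled $(\ast_{14}, \ast_8)$.
\item \emph{Pushing $b$ down.} Relabel the reader $c$ by negating its first argument, $\ast_8 \to \ast_7$, and delete $b$. The resulting circuit has gates labeled $(\ast_{10}, \ast_7)$.
\item \emph{Both results are normal forms.} Each circuit has two binary gates over distinct variables, with no constants, no negation gates, and no repeated arguments, so no rule applies.
\item \emph{They are distinct.} Their label multisets differ, so they are not isomorphic. Their terms differ syntactically, so they are not bisimilar.
\end{itemize}

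Hence the two reducts have no common reduct, and confluence fails. For the claim ``even when restricted to optimal circuits,'' I would pick the circuit so that it arises from an optimal circuit by a substitution: start with $c$ reading an $\ast_8$ gate $b' = a \ast_8 q$, and set $q \leftarrow 1$. This produces the $\ast_4$ configuration above. It remains to check that the pre-substitution circuit is optimal for its function, which is a small case check since it has three gates over five inputs, all essential.

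The main obstacle is the first step: making precise what ``any similar formulation'' means. The claim must be robust to the choice of rule set, not just refute one particular system. I would handle this by stating the minimal requirements: soundness, non-increasing size, and normal forms free of $\ast_4/\ast_6$ and constants. I would then show that these requirements force the two conflicting push rules, which the example above then exploits.
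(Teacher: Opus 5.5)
Your strategy is the paper's. Passing substitutions such as $q \leftarrow 1$ in $p \ast_8 q$ force $\ast_4$/$\ast_6$ gates to appear, both absorption directions are needed, and a single configuration where the two absorptions diverge refutes confluence.

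\textbf{The explicit witness is unsound as written.} Neither of your relabelings preserves the computed function. In your circuit, $a = u \ast_{10} v = u \land \neg v$, $b = \neg a$, and $c = b \ast_8 z = a \land z$.
\begin{itemize}
\item Absorbing $b$ into its feeder $a$ requires $\neg(\ast_{10}) = \ast_9$ (Table~\ref{tab:negations}), not $\ast_{14}$. With $\ast_{14}$ you get $c = (u \lor v) \land z$.
\item Absorbing $b$ into its reader $c$ requires the label of $\ast_8$ with negated first argument, $(\neg p) \ast_8 q = p \land q = \ast_{11}$ (Table~\ref{tab:U2}), not $\ast_7$. With $\ast_7$ you get $c = (u \land \neg v) \lor \neg z$.
\end{itemize}
You seem to have copied $\ast_{10} \to \ast_{14}$ and $\ast_8 \to \ast_7$ from Figure~\ref{fig:pushing-example}. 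There the roles are reversed: $\ast_8$ is the feeder of the $\ast_4$ gate, so absorbing gives $\neg(\ast_8) = \ast_7$, and $\ast_{10}$ is its reader, so a negated first argument gives $\ast_{14}$.

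With corrected labels your reducts are $(\ast_9, \ast_8)$ and $(\ast_{10}, \ast_{11})$. Both compute $u \land \neg v \land z$, they are non-isomorphic, and their terms differ. Your optimal pre-substitution circuit $((u \land \neg v) \lor \neg q) \land z$ also works. However, it has four inputs $u, v, q, z$, not five; three binary gates cannot depend on five variables.

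Inside the witness you also call absorption into the feeder ``pushing up.'' This contradicts your own earlier convention, and the paper's, under which that is pushing down.

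\textbf{The robustness claim does not follow.} Your requirements are soundness, size non-increase, and normal forms free of $\ast_4/\ast_6$ and constants. These do force both absorption rules to exist. They do not show that your two circuits are normal forms, or that they are unjoinable. A system meeting all three requirements could contain a size-preserving polarity rule such as $\ast_8(\ast_9(x,y),z) \to \ast_{11}(\ast_{10}(x,y),z)$, which joins exactly this pair. The paper argues only, and informally, about a system containing the two push rules plus the usual simplifications. Either restrict your claim the same way, or add a hypothesis that excludes such polarity-moving rules.
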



\section{Constructive Circuit Lower Bounds via Convergent Gate Elimination}
\label{sec:rewriting-system}

Proofs by gate elimination often show a circuit $C$ has property $\cP$ with the following argument.

\begin{enumerate}
\item Assume, for the sake of contradiction, that $C$ does \textbf{not} have property $\cP$.
\item Select a variable $x_i$ and constant $\alpha$ for substitution using $\neg \cP$ and the structure of $C$.
\item \textbf{\emph{Simplify}} $C$ under the substitution $\{ x_i \mapsto \alpha \}$, to obtain a constant-free circuit $C'$.
\item Argue that a critical property $\cP'$ of $C'$ implies a contradiction.
\end{enumerate}

To demonstrate that this system is sufficiently powerful to capture gate elimination arguments, we revisit Schnorr's lower-bound for the parity function $\XOR$ using the system $\mathcal{S}$. To begin, we must first recall the parity functions $\XOR_n$:
\begin{definition}\label{def:xor}
    For an $n$ bit input $\vec{x} = (x_1, \ldots, x_n) \in \{0,1\}^n$, $\XOR_n(\vec{x}) = (\sum_{i=1}^n x_i) \bmod 2.$
\end{definition}
We define $(\neg) f$ to mean ``$f$ or $\neg f$.'' For a (partial) assignment $\alpha$, we define $f \hook_\alpha$ to be the function restricted under $\alpha$. We now recall two useful facts about $(\neg)\XOR_n$.

\begin{fact}\label{fact:xor-dsr}
    $\XOR_n$ is \emph{self-reducible}, i.e., for any $i \in [n]$, for any assignment $b \in \{0,1\},$ 
    \[\XOR_n\hook_{x_i \leftarrow b} \equiv (\neg)\XOR_{n-1}.\]
\end{fact}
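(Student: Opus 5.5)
The plan is a direct computation from Definition \ref{def:xor}: restrict one variable and rewrite the remaining sum. Fix $i \in [n]$ and $b \in \{0,1\}$. For any assignment $\vec{y} = (x_1,\dots,x_{i-1},x_{i+1},\dots,x_n) \in \{0,1\}^{n-1}$ to the surviving variables, evaluate $\XOR_n\hook_{x_i \leftarrow b}$ at $\vec{y}$. By Definition \ref{def:xor} this equals $\bigl(b + \sum_{j \neq i} x_j\bigr) \bmod 2$. Reduction mod $2$ commutes with addition, so this is
\[
\bigl(b + \XOR_{n-1}(\vec{y})\bigr) \bmod 2 ,
\]
where $\XOR_{n-1}$ is read on the $n-1$ remaining variables in their inherited order.

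Next split on $b$.
\begin{itemize}
\item If $b = 0$, the expression is exactly $\XOR_{n-1}(\vec{y})$.
\item If $b = 1$, adding $1$ mod $2$ flips the bit, so the expression is $\neg\XOR_{n-1}(\vec{y})$.
\end{itemize}
The identity holds pointwise for every $\vec{y}$, so in both cases $\XOR_n\hook_{x_i\leftarrow b}\equiv(\neg)\XOR_{n-1}$, with the negation present exactly when $b=1$.

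There is no real obstacle. The only care needed is bookkeeping: the restricted function is indexed by the $n-1$ remaining variables, after a relabeling of indices. For later use in the induction, I would also note that the same argument applies verbatim to $\neg\XOR_n$, since negating both sides preserves the conclusion. So the family $(\neg)\XOR$ is closed under single-variable restriction. For the edge case $n=1$, $\XOR_0$ is the constant $0$, and the statement still holds.
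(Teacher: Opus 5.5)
Your proof is correct. Restricting $x_i$ to $b$ turns the sum into $b + \sum_{j \neq i} x_j$, which gives $\XOR_{n-1}$ when $b=0$ and $\neg\XOR_{n-1}$ when $b=1$; the paper only recalls this fact without proof, and your direct computation from Definition \ref{def:xor} is the standard argument it relies on, including the closure of $(\neg)\XOR$ under restriction that the later induction uses.
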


\begin{fact}\label{fact:xor-non-degeneracy}
  $\XOR_n$ is \emph{maximally sensitive}, i.e., for all $i \in [n]$, for all assignments $\alpha$,
  \[\XOR_n(\alpha) \neq \XOR_n(\alpha^{\oplus_i}) \text{ where } \alpha^{\oplus_i} \text{ denotes } \alpha \text{ but with its } i\text{th bit flipped}.\]
\end{fact}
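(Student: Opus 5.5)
The final statement is Fact~\ref{fact:xor-non-degeneracy}: $\XOR_n$ is maximally sensitive. The plan is a direct computation from Definition~\ref{def:xor}, with no use of the rewriting machinery.

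Fix $i \in [n]$ and an assignment $\alpha = (\alpha_1,\dots,\alpha_n) \in \{0,1\}^n$, and let $\alpha^{\oplus_i}$ agree with $\alpha$ except in coordinate $i$, where it equals $1-\alpha_i$. Then
\[
\sum_{j=1}^n \alpha^{\oplus_i}_j \;=\; \sum_{j \neq i} \alpha_j + (1-\alpha_i) \;=\; \sum_{j=1}^n \alpha_j + (1 - 2\alpha_i).
\]
Since $1-2\alpha_i \in \{1,-1\}$ is odd, the two sums differ by an odd integer, so they have different residues mod $2$. Hence $\XOR_n(\alpha^{\oplus_i}) \neq \XOR_n(\alpha)$.

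There is no real obstacle. The only point to state explicitly is that $\pm 1 \equiv 1 \pmod 2$, which makes the parity flip regardless of whether $\alpha_i$ was $0$ or $1$. One could also phrase the argument via self-reducibility (Fact~\ref{fact:xor-dsr}): fixing all bits other than $x_i$ leaves a restriction of the form $x_i \oplus c$, which is non-constant in $x_i$. The direct computation above is simpler, so that is the argument I would write.
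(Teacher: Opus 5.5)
Your argument is correct: flipping bit $i$ changes the sum by $1-2\alpha_i = \pm 1$, which is odd, so the parity changes. The paper does not prove this at all; it simply recalls it as a standard fact about parity, and your direct computation from Definition~\ref{def:xor} is exactly the routine justification one would supply.
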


Schnorr's proof is the quintessential gate elimination proof. We provide the full details of re-proving it using our system in Appendix \ref{sec:ckt-rewrite-demo} and merely provide a sketch here.

\begin{theorem}[Schnorr, \cite{Schnorr74}] \label{thm:schnorr}
  $\XOR_n$ requires $3(n-1)$ gates in the DeMorgan basis.
\end{theorem}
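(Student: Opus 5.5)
We argue by induction on $n$, proving the stronger statement that every DeMorgan circuit computing $(\neg)\XOR_n$ has at least $3(n-1)$ binary gates. The base case $n=1$ is trivial.

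For $n\ge 2$, let $C$ be a circuit computing $(\neg)\XOR_n$. First apply the convergent system $\cS'$ of Theorem~\ref{thm:GE-confluent} until $C$ is in normal form. Every rule of $\cR_B$ preserves or decreases the number of binary gates; the rule $0\to\neg 1$ adds only a free $\neg$ gate. So it suffices to lower-bound circuits in normal form, which contain no constants, no double negations, and no gates of the form $\gamma\land\gamma$ or $\gamma\lor\gamma$. By Fact~\ref{fact:xor-non-degeneracy}, every variable $x_i$ feeds, through possibly a $\neg$, into at least one binary gate; otherwise flipping $x_i$ could not change the output.

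The structural step is to find a variable that feeds at least two binary gates. Take a topologically first binary gate $A$. Both its inputs are variables or negated variables, say $x_i$ and $x_j$, and $i\ne j$ by normal form (identity and tautology rules). Suppose $x_i$ fed only $A$. Then $C$ would depend on $x_i,x_j$ only through $A = (\neg)x_i \circ (\neg)x_j$, with $\circ\in\{\land,\lor\}$. Some setting of $x_j$ makes $A$ constant, and then $x_i$ no longer affects the output, contradicting Fact~\ref{fact:xor-non-degeneracy}. So $x_i$ feeds $A$ and a second binary gate $B$.

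Next choose $b\in\{0,1\}$ so that $(\neg)x_i=b$ \emph{fixes} $A$, making $A$ constant. Substitute $x_i\mapsto b$ and simplify with $\cS'$:
\begin{itemize}
\item $A$ is removed by a fixing rule.
\item $B$ is removed by a fixing or passing rule.
\item $A$ now outputs a constant, and $A$ cannot be the output gate, since the restricted function $(\neg)\XOR_{n-1}$ is non-constant when $n\ge2$. So some successor $D$ of $A$ receives a constant input and is also eliminated.
\end{itemize}

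The main obstacle is showing that $D\ne B$, or otherwise finding a third distinct eliminated gate. If $D=B$, then $B$ reads both $(\neg)x_i$ and $A$, so $B$ computes a function of $x_i,x_j$ alone. We split on whether $B$ is the output:
\begin{itemize}
\item If $B$ is the output, then $C$ depends only on $x_i,x_j$, which contradicts $n\ge 3$. The case $n=2$ is handled directly.
\item Otherwise, apply the same argument to $B$: it is a gate with a constant input, so its successor is eliminated as well.
\end{itemize}
Because the rewriting system is confluent, the normal form is the same whichever order these eliminations are performed in. This lets us count the three removed gates without tracking any particular rewrite sequence.

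By Fact~\ref{fact:xor-dsr}, the resulting circuit computes $(\neg)\XOR_{n-1}$ and has at most $|C|-3$ binary gates. The induction hypothesis then gives $|C|\ge 3(n-2)+3=3(n-1)$.
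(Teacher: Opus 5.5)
Your proof is correct, but it fixes a different gate than the paper does. The paper substitutes $x_i$ so as to fix the \emph{second} successor $f$ of $x_i$ (your $B$). Then $h$ (your $A$) is eliminated because it reads a constant, $f$ is fixed, and a costly successor $f'$ of $f$ is eliminated. The distinctness $f'\neq h$ comes for free, because $h$ is topologically first and $f'$ lies after $f$; the only side condition is that $(\neg)f$ is not the output.

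By fixing $A$ itself you lose that free disjointness, which is why you need the extra branch $D=B$. That branch works, but for a more specific reason than the one you give. \emph{Both} inputs of $B$, namely $(\neg)x_i$ and $(\neg)A$, become constant, so $B$ itself is constant. ``A gate with a constant input'' is not enough on its own, since a passed gate does not hand a constant to its successor.

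You also need no special treatment of $n=2$. If $(\neg)B$ were the output, $C$ restricted by $x_i\mapsto b$ would be constant, contradicting Fact~\ref{fact:xor-dsr} for every $n\ge 2$; this is the same test the paper applies to $(\neg)f$. Otherwise $B$ has a binary successor $E\notin\{A,B\}$ by acyclicity, and $A,B,E$ are all eliminated.

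Comparing the two routes: the paper's choice of gate is shorter, because topological minimality does the bookkeeping for it. Your normalization preprocessing, together with the strengthened hypothesis for $(\neg)\XOR_n$, is a clean substitute for the paper's appeal to an optimal circuit, and it makes explicit that the restricted circuit may compute $\neg\XOR_{n-1}$.

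One technicality remains. That $A$ cannot read two copies of the same literal is a normal-form property of $\cS$, which includes collapse. Under plain $\cS'$, a gate reading two distinct $\neg x_i$ nodes survives normalization. So either normalize with $\cS$ here (efficiency is irrelevant for the lower bound), or assume optimality as the paper does.
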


\begin{proof}[Proof Sketch]
    Let $C$ be an optimal circuit for $\XOR_n$ with $n \geq 2$, our goal is to find a substitution of an input hyperedge which results in at least 3 costly gate hyperedges (those labeled with $\land$ or $\lor$) to be removed during rewriting steps. We first fix a topological order and let $h$ be its first costly gate. Then, $h$ has two arguments whose terms are $(\neg)x_i$ and $(\neg)x_j$. If $h$ were the only costly gate reading $(\neg)x_i$, we can substitute $x_j$ to eliminate $h$ via a fixing rule, leaving $x_i$ disconnected. However, the resulting circuit, which by self-reducibility (Fact \ref{fact:xor-dsr}) computes $(\neg)\XOR_{n-1}$, no longer depends on $x_i$. This contradicts non-degeneracy of $\XOR_{n-1}$ (Fact \ref{fact:xor-non-degeneracy}). 
    
    Therefore, another costly gate $f \neq h$ reads $(\neg)x_i$. Furthermore, $(\neg)f$ cannot be the output since we can fix it by substituting $x_i$, again contradicting Fact \ref{fact:xor-dsr}. Let $f'$ be any binary gate reading $f$. Then, we choose the substitution for $x_i$ that fixes $f$. This ultimately removes at least 3 costly gates $h, f, f'$ after simplification and the resulting circuit computes $\XOR_{n-1}$ by Fact \ref{fact:xor-dsr}. Finally, we apply induction to obtain the desired lower bound.
\end{proof}

Schnorr's proof is simple, but when viewed through the lens of formal methods, it is hard not to see it as an algorithm for breaking down $\XOR$ circuits; this is especially true of our structured proof in Appendix \ref{sec:schnorr-structured-proof}. Gate elimination yields its lower bounds by arguing that if a circuit does not have some minimal structure, then it doesn't compute its underlying function. Since a function is defined by its behavior on its inputs, it seems reasonable to expect that we can derive which inputs the circuit would fail on. We show that this intuition is true at least for Schnorr's proof by giving the first constructive separation for a circuit lower bound proved by gate elimination. 

\begin{theorem}
  \label{thm:xor-refuter}
    Let $C$ be a DeMorgan circuit on $n$ inputs with size $s < 3(n - 1)$ that claims to compute $\XOR_n$. We can construct an input $x \in \{0, 1\}^n$ such that $C(x) \neq \XOR_n(x)$ in polynomial time with respect to $n$.
\end{theorem}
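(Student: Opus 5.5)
We turn Schnorr's argument (Theorem \ref{thm:schnorr}) into a recursive refuter. It maintains a current circuit $C_k$ on the remaining variables, a partial assignment $\rho_k$ fixing the substituted variables, and a target function $(\neg)\XOR_{n-k}$, with the invariant that if $C_k$ disagrees with the target on some input $y$, then $C$ disagrees with $\XOR_n$ on $\rho_k \cup y$. Every simplification uses the plain system $\cS'$ of Theorem \ref{thm:GE-confluent}, not $\cS$. Its convergence up to bisimulation means the rewritten circuit computes the restricted function, and each proper rewrite step is an explicit local operation (Algorithm \ref{algo:ckt-simp-system}) that never increases the gate count. A whole normalization therefore takes polynomially many steps, each found by a polynomial-time local pattern search. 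The one exception is $0 \to \neg 1$, which is applied at most once per constant occurrence.

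At each stage, with $m = n-k$ variables and $\mathsf{size}(C_k) < 3(m-1)$, the refuter follows the proof sketch, but each appeal to a Fact becomes an explicit test producing a counterexample.
\begin{enumerate}
\item If $m \le 1$ (or the size is $0$), evaluate $C_k$ on all at most two inputs and output a point of disagreement. One must exist, since a size-0 circuit reads at most one literal or constant, and the size bound forces this case whenever $m=2$.
\item Otherwise take the first costly gate $h$ in topological order, with literal inputs $(\neg)x_i$ and $(\neg)x_j$. If no other costly gate reads $x_i$, substitute the value of $x_j$ that fixes $h$ and simplify. The resulting circuit no longer depends on $x_i$, so evaluate it at any $y$ and at $y^{\oplus_i}$. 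By Fact \ref{fact:xor-non-degeneracy} the target differs on these two inputs while the circuit does not, so one of them is a counterexample. Lift it through $\rho_k$.
\item If another costly gate $f$ reads $(\neg)x_i$ and $(\neg)f$ is the output, substitute the value of $x_i$ that fixes $f$. The circuit becomes constant, and evaluating it at two neighbouring points again yields a counterexample via Fact \ref{fact:xor-non-degeneracy}.
\item Otherwise substitute the value of $x_i$ that fixes $f$ and normalize. The case analysis of Theorem \ref{thm:schnorr} shows that at least three costly gates disappear, so $\mathsf{size}(C_{k+1}) < 3(m-2)$. By Fact \ref{fact:xor-dsr} the new target is $(\neg)\XOR_{m-1}$, so we recurse.
\end{enumerate}
There are at most $n$ rounds, each polynomial, so the total running time is polynomial.

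The main obstacle is step 4. Counting three eliminated gates in the normal form requires knowing that $h$, $f$ and a successor $f'$ of $f$ are distinct and are all actually removed, whatever order the rewrites are applied in. Convergence handles the order-independence: it suffices to exhibit one rewrite sequence removing them, because the normal form is unique up to bisimulation. I would not rely on bisimulation preserving gate counts, since it need not do so without collapse. Instead I would track the specific sequence the algorithm itself performs: fix $f$, then pass or fix $f'$, and use the fact that $h$ loses an input.

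A subtle case is $f'=h$, or $h$ not being eliminated because $x_i$ feeds it in a way that only passes it. There I would use the refuter's freedom to first test whether the simplified circuit still depends on every variable, producing a counterexample via Fact \ref{fact:xor-non-degeneracy} when it does not. This mirrors the appendix case analysis, and every "contradiction" in that analysis is turned into an explicit output.
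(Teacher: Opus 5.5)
Your refuter is the paper's Algorithm~\ref{algo:schnorr} written recursively, and its main line is right. It has the same three branches: the \texttt{DEGEN} case when $(\neg)x_i$ has costly fanout one, the \texttt{CONST} case when $(\neg)f$ is the output, and otherwise fixing $f$ to delete $h,f,f'$ and recursing on $(\neg)\XOR_{m-1}$. You count the deleted gates along the algorithm's own rewrite sequence, using only that no rule of $\cR_B$ creates a binary gate. That is sound, and somewhat more careful than the paper's remark that uniqueness up to bisimulation ``is all we require''. One small correction: the fact that $C_k$ computes the restricted function follows from soundness of each rule as a Boolean identity, not from convergence.

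That said, three points in your write-up need repair.

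First, your base case is misjustified. At $m=2$ the invariant gives size $<3$, i.e.\ at most $2$, not $0$. The recursion still ends correctly, but for a different reason: step~4 needs three distinct costly gates, so at $m=2$ step~2 or step~3 must fire. The paper instead stops at $|V|=2$ and brute-forces the four completions.

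Second, your ``subtle cases'' cannot occur. $h$ is the topologically first costly gate and reads only literals, so it cannot read $(\neg)f$; hence $f'\neq h$. Once $x_i$ is set, $h$ reads a constant and is deleted, by a passing rule or a fixing rule. So $h,f,f'$ are always three distinct deleted gates. Your dependence-test fallback would not have helped anyway: if only two gates had disappeared and every variable were still read, it would find nothing and the size invariant would fail. What the argument needs is the observation above, not the fallback.

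Third, you never normalize the input circuit or argue $i\neq j$. The paper normalizes $C$ first (line~1) and uses the normal form to get that $h$ reads non-constant literals of two distinct variables. Without $i\neq j$, your step~2 would substitute $x_i$ itself and produce no degeneracy to exploit.
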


\begin{proof}[Proof]
    Let $C$ be a DeMorgan circuit on $n$ inputs with size $s < 3(n - 1)$ and let $f$ be the function it computes. If $n = 2$, then we can brute-force evaluate all four possible inputs to find a counterexample in constant time. For larger $n$, we can run Algorithm \ref{algo:schnorr}. At a high level, the algorithm will follow Schnorr's proof; each iteration, it tries to find a substitution that removes at least three gates. If it cannot, then the circuit's function will violate Fact \ref{fact:xor-non-degeneracy}, and we can easily extract a counterexample. If successful, it instead produces a circuit that is too small to compute $\XOR$ on one fewer variable which it can recurse on. This continues until, eventually, the remaining circuit purports to compute $(\neg)\XOR_2$ with fewer than three gates, and again we can simply brute-force over its four completions.

    Algorithm \ref{algo:schnorr} keeps track of (i) a partial restriction $\alpha$ and (ii) a list $V$ of the variables still unrestricted by the circuit. During each iteration, it considers $C_i$, a \textbf{normalized} circuit computing $f\hook_{\alpha}$. In lines 4-5, the algorithm can identify an obvious contradiction to Fact \ref{fact:xor-non-degeneracy}. If $C_i$ is degenerate with respect to some variable(s) in $V$ then $f\hook_{\alpha}$ does not depend on some $x_j$ for some $j \in V$. Let $\hat{\alpha}$ be any completion of $\alpha$, then $C (\hat{\alpha}) = f(\hat{\alpha}) = f(\hat{\alpha}^{\oplus_j}) = C(\hat{\alpha}^{\oplus_j})$; $C$ is incorrect on one of these two inputs since $\XOR_n(\hat{\alpha}) \neq \XOR_n(\hat{\alpha}^{\oplus_j})$.

    Otherwise, in line 7, the algorithm selects the topologically minimal binary gate of $C_i$ which must read $(\neg)x_p$ and $(\neg)x_q$ where $i \neq j$ since $C_i$ is normalized. As in Schnorr, $(\neg)x_p$ must be read by another binary gate $f$ besides $h$. If this is not the case on line 9 then we are in a similar case to the previous $\texttt{DEGEN}$ case: after adding the correct substitution of $x_q$ to $\alpha$ and normalizing, $f\hook_\alpha$ will no longer depend on $x_p$, and $C$ must err on any completion $\hat{\alpha}$ or $\hat{\alpha}^{\oplus_p}.$ Furthermore, $(\neg)f$ must not be the output or else we can fix the circuit to be constant on line 15-17. If it were, we could then take any completion $\hat{\alpha}$ and it or $\hat{\alpha}^{\oplus_q}$ will be a counterexample, since $f$ no longer depends on $x_q$.

    If the algorithm does not return prematurely, then as in Schnorr, there exists some $f'$ that reads $(\neg)f$. Algorithm \ref{algo:schnorr} selects a substitution for $x_p$ so that simplification removes $h, f$ and $f'$ reducing the size of the circuit by at least three per loop iteration. Hence, the resulting $C_{i+1}$ is too small to compute $\XOR_{|V|}$. In the worst case, the algorithm goes through $n-2$ iterations and the resulting $C_{n-2}$ will, by Fact \ref{fact:xor-dsr} purportedly compute $(\neg)\XOR_2$. However, its size will be at most $s - 3(n-2) < 3(n-1) - 3(n-2) = 3$. Therefore, we can simply brute-force over all four completions of $\alpha$ and $C$ must err on at least one of them.

    It is easy to see this process runs in polynomial time (with respect to $n$). There are at most linear loop iterations, and each condition can be checked in polynomial time. Since we are using system $\mathcal{S}'$ without collapse, substitution and simplification in each iteration can be done in polynomial time. To extract a counterexample, we only need to compute the circuit on a constant number of inputs and each evaluation can be done in polynomial time. 
\end{proof}

\begin{algorithm}[h]
  \caption{Efficient search for a \emph{bad restriction} $\alpha$ witnessing that $C$ does not compute $\XOR$.}
  \label{algo:schnorr}
  \begin{algorithmic}[1]
    \Require $C$ is a circuit on $n$ input variables with $s$ gates with $n > 3$ and $s < 3(n-1)$
    \State $i \gets 0 $ ~;~ $C_0 \gets \text{Normalize } C$ ~;~ $\alpha \gets \emptyset$ ~;~ $V \gets [n]$
    \LComment{number of rounds ~;~ working circuit ~;~ working restriction ~;~ active variables}
    \While{|V| > 2}
    \If{a variable $x_j$ for some $j \in V$ is not read in $C_i$}
    \State \Return "\texttt{DEGEN} in $x_j$ under $\alpha$"
    \EndIf
    \State \textbf{Assert:} $C_i$ is in normal form and does not trivially fail to compute $\XOR_{|V|}$
    \State $h \gets$ topologically minimal $\{\land, \lor\}$-gate in $C_i$
    \State \textbf{Assert:} $h$ has inputs $(\neg)x_p, (\neg)x_q$ for $p, q \in V$ and $p \neq q$ \LComment{the assertion holds because $C_i$ is in normal form}
    \If{the fanout of $(\neg)x_p$ is 1}
    \State add $\mathsf{fixer}(h,x_q) \mapsto x_q$ to $\alpha$
    \LComment{$\mathsf{fixer}(h, x_q)$ is the substitution for $x_q$ that removes $h$ via a fixing rule}
    \State \Return "\texttt{DEGEN} in $x_p$ under $\alpha$"
    \Else
    \State $f \gets$ the distinct gate reading $x_p$ besides $h$, so $f \neq h$
    \EndIf
    \If{$(\neg)f$ is the output gate of $C_i$}
    \State add $\mathsf{fixer}(f, x_p) \mapsto x_p$ to $\alpha$
    \State \Return "\texttt{CONST} under $\alpha$"
    \Else
    \State $f' \gets$ topologically minimal costly successor of $f$ in $C_i$
    \EndIf
    \State add $\mathsf{fixer}(f,x_p) \mapsto x_p$ to $\alpha$
    \State $C_{i+1} \gets $ Substitute $x_p \gets \mathsf{fixer}(f,x_p)$ in $C_i$ and simplify
    \Comment{Eliminates $h, f,$ and $f'$}
    \LComment{By convergence of system $\cS'$ this step is deterministic --- greedy selection of simplifications is the same up to bisimulation of $C_{i+1}$ which is all we require }
    \State $V \gets V \setminus \{ p \}$
    \State $i \gets (i + 1)$
    \State \textbf{Assert:} $C_{i+1}$ is a normalized circuit which is too small to compute $(\neg)\XOR_{|V|}$ 
    \EndWhile
    \State \Return "\texttt{FAILS} under $\alpha$"
    \Comment{``fail search'', brute-force is feasible to find the error}
  \end{algorithmic}
\end{algorithm}


\clearpage



\bibliography{references}

@ARTICLE{Shannon49,
    author={Shannon, Claude. E.},
    journal={The Bell System Technical Journal}, 
    title={The synthesis of two-terminal switching circuits}, 
    year={1949},
    volume={28},
    number={1},
    pages={59-98},
    doi={10.1002/j.1538-7305.1949.tb03624.x}
}

@incollection{KNUTH1970,
    title = {Simple Word Problems in Universal Algebras††The work reported in this paper was supported in part by the U.S. Office of Naval Research.},
    editor = {JOHN LEECH},
    booktitle = {Computational Problems in Abstract Algebra},
    publisher = {Pergamon},
    pages = {263-297},
    year = {1970},
    isbn = {978-0-08-012975-4},
    doi = {https://doi.org/10.1016/B978-0-08-012975-4.50028-X},
    url = {https://www.sciencedirect.com/science/article/pii/B978008012975450028X},
    author = {DONALD E. KNUTH and PETER B. BENDIX}
}

@article{Stockmeyer77,
    author    = {Larry J. Stockmeyer},
    title     = {On the Combinational Complexity of Certain Symmetric Boolean Functions},
    journal   = {Math. Syst. Theory},
    volume    = {10},
    pages     = {323--336},
    year      = {1977},
    url       = {https://doi.org/10.1007/BF01683282},
    doi       = {10.1007/BF01683282},
    bibsource = {dblp computer science bibliography, https://dblp.org}
}

@article{HUET1981,
    title = {A complete proof of correctness of the Knuth-Bendix completion algorithm},
    journal = {Journal of Computer and System Sciences},
    volume = {23},
    number = {1},
    pages = {11-21},
    year = {1981},
    issn = {0022-0000},
    doi = {https://doi.org/10.1016/0022-0000(81)90002-7},
    url = {https://www.sciencedirect.com/science/article/pii/0022000081900027},
    author = {Gérard Huet}
}

@article{Redkin1973,
    title={Proof of minimality of circuits consisting of functional elements},
    author={Red’kin, NP},
    journal={Systems Theory Research: Problemy Kibernetiki},
    pages={85--103},
    year={1973},
    publisher={Springer}
}

@article{Schnorr74,
    author    = {Claus{-}Peter Schnorr},
    title     = {Zwei lineare untere Schranken f{\"{u}}r die Komplexit{\"{a}}t
               Boolescher Funktionen},
    journal   = {Computing},
    volume    = {13},
    number    = {2},
    pages     = {155--171},
    year      = {1974},
    url       = {https://doi.org/10.1007/BF02246615},
    doi       = {10.1007/BF02246615},
    bibsource = {dblp computer science bibliography, https://dblp.org}
}

@book{Wegener1987,
    place     = {Stuttgart},
    title     = {The Complexity of Boolean Functions},
    publisher = {Wiley-Teubner},
    author    = {Wegener, Ingo},
    year      = {1987}
}

@inproceedings{Socher-Ambrosius91,
  author       = {Rolf Socher{-}Ambrosius},
  editor       = {Ronald V. Book},
  title        = {Boolean Algebra Admits No Convergent Term Rewriting System},
  booktitle    = {Rewriting Techniques and Applications, 4th International Conference,
                  RTA-91, Como, Italy, April 10-12, 1991, Proceedings},
  series       = {Lecture Notes in Computer Science},
  volume       = {488},
  pages        = {264--274},
  publisher    = {Springer},
  year         = {1991},
  url          = {https://doi.org/10.1007/3-540-53904-2\_102},
  doi          = {10.1007/3-540-53904-2\_102},
  bibsource    = {dblp computer science bibliography, https://dblp.org}
}

@article{Zwick91,
  title={A 4n lower bound on the combinational complexity of certain symmetric boolean functions over the basis of unate dyadic Boolean functions},
  author={Zwick, Uri},
  journal={SIAM Journal on Computing},
  volume={20},
  number={3},
  pages={499--505},
  year={1991},
  publisher={SIAM}
}

@inbook{Plaisted93,
    author = {Plaisted, David A.},
    title = {Equational reasoning and term rewriting systems},
    year = {1993},
    isbn = {019853745X},
    publisher = {Oxford University Press, Inc.},
    address = {USA},
    booktitle = {Handbook of Logic in Artificial Intelligence and Logic Programming (Vol. 1)},
    pages = {274–364},
    numpages = {91}
}

@article{lamport1995write,
  title={How to write a proof},
  author={Lamport, Leslie},
  journal={The American mathematical monthly},
  volume={102},
  number={7},
  pages={600--608},
  year={1995},
  publisher={Taylor \& Francis}
}

@Book{BaaderN1998,
  author       = {Franz Baader and Tobias Nipkow},
  title        = {Term rewriting and all that},
  year         = 1998,
  publisher    = {Cambridge University Press},
  isbn         = {978-0-521-45520-6},
  bibsource    = {dblp computer science bibliography,
                  https://dblp.org}
}

@article{Plump99,
  title={Term graph rewriting},
  author={Plump, Detlef},
  journal={Handbook Of Graph Grammars And Computing By Graph Transformation: Volume 2: Applications, Languages and Tools},
  pages={3--61},
  year={1999},
  publisher={World Scientific}
}

@inproceedings{IwamaM02,
    author    = {Kazuo Iwama and
               Hiroki Morizumi},
    editor    = {Krzysztof Diks and
               Wojciech Rytter},
    title     = {An Explicit Lower Bound of 5n - o(n) for Boolean Circuits},
    booktitle = {Mathematical Foundations of Computer Science 2002, 27th International
               Symposium, {MFCS} 2002, Warsaw, Poland, August 26-30, 2002, Proceedings},
    series    = {Lecture Notes in Computer Science},
    volume    = {2420},
    pages     = {353--364},
    publisher = {Springer},
    year      = {2002},
    url       = {https://doi.org/10.1007/3-540-45687-2\_29},
    doi       = {10.1007/3-540-45687-2\_29},
    bibsource = {dblp computer science bibliography, https://dblp.org}
}

@inproceedings{LachishR01,
    author = {Lachish, Oded and Raz, Ran},
    title = {Explicit lower bound of 4.5n - o(n) for boolena circuits},
    year = {2001},
    isbn = {1581133499},
    publisher = {Association for Computing Machinery},
    address = {New York, NY, USA},
    url = {https://doi.org/10.1145/380752.380832},
    doi = {10.1145/380752.380832},
    booktitle = {Proceedings of the Thirty-Third Annual ACM Symposium on Theory of Computing},
    pages = {399–408},
    numpages = {10},
    location = {Hersonissos, Greece},
    series = {STOC '01}
}

@article{Kombarov2011,
    title={The minimal circuits for linear Boolean functions},
    author={Kombarov, Yu A},
    journal={Moscow University Mathematics Bulletin},
    volume={66},
    number={6},
    pages={260--263},
    year={2011},
    publisher={Springer}
}

@inproceedings{DemenkovK11,
    author    = {Evgeny Demenkov and
               Alexander S. Kulikov},
    editor    = {Filip Murlak and
               Piotr Sankowski},
    title     = {An Elementary Proof of a 3n - o(n) Lower Bound on the Circuit Complexity
               of Affine Dispersers},
    booktitle = {Mathematical Foundations of Computer Science 2011 - 36th International
               Symposium, {MFCS} 2011, Warsaw, Poland, August 22-26, 2011. Proceedings},
    series    = {Lecture Notes in Computer Science},
    volume    = {6907},
    pages     = {256--265},
    publisher = {Springer},
    year      = {2011},
    url       = {https://doi.org/10.1007/978-3-642-22993-0\_25},
    doi       = {10.1007/978-3-642-22993-0\_25},
    bibsource = {dblp computer science bibliography, https://dblp.org}
}

@inproceedings{SternagelZ12,
    author       = {Thomas Sternagel and
                  Harald Zankl},
    editor       = {Bernhard Gramlich and
                  Dale Miller and
                  Uli Sattler},
    title        = {{KBCV} - Knuth-Bendix Completion Visualizer},
    booktitle    = {Automated Reasoning - 6th International Joint Conference, {IJCAR}
                  2012, Manchester, UK, June 26-29, 2012. Proceedings},
    series       = {Lecture Notes in Computer Science},
    volume       = {7364},
    pages        = {530--536},
    publisher    = {Springer},
    year         = {2012},
    url          = {https://doi.org/10.1007/978-3-642-31365-3\_41},
    doi          = {10.1007/978-3-642-31365-3\_41},
    bibsource    = {dblp computer science bibliography, https://dblp.org}
}

@article{lamport2012write,
    title={How to write a 21 st century proof},
    author={Lamport, Leslie},
    journal={Journal of fixed point theory and applications},
    volume={11},
    pages={43--63},
    year={2012},
    publisher={Springer}
}

@INPROCEEDINGS{FindGHK2016,
    author={Find, Magnus Gausdal and Golovnev, Alexander and Hirsch, Edward A. and Kulikov, Alexander S.},
    booktitle={2016 IEEE 57th Annual Symposium on Foundations of Computer Science (FOCS)}, 
    title={A Better-Than-3n Lower Bound for the Circuit Complexity of an Explicit Function}, 
    year={2016},
    volume={},
    number={},
    pages={89-98},
    doi={10.1109/FOCS.2016.19}
}

@article{GolovnevHKK18,
    author    = {Alexander Golovnev and
               Edward A. Hirsch and
               Alexander Knop and
               Alexander S. Kulikov},
    title     = {On the limits of gate elimination},
    journal   = {J. Comput. Syst. Sci.},
    volume    = {96},
    pages     = {107--119},
    year      = {2018},
    url       = {https://doi.org/10.1016/j.jcss.2018.04.005},
    doi       = {10.1016/j.jcss.2018.04.005},
    bibsource = {dblp computer science bibliography, https://dblp.org}
}

@article{Kombarov2018,
    title={Complexity and Structure of Circuits for Parity Functions},
    author={Kombarov, Yu A},
    journal={Journal of Mathematical Sciences},
    volume={233},
    pages={95--99},
    year={2018},
    publisher={Springer}
}

@article{Ilango20,
    author = {Ilango, Rahul},
    title = {Constant Depth Formula and Partial Function Versions of MCSP Are Hard},
    journal = {SIAM Journal on Computing},
    volume = {0},
    number = {0},
    pages = {FOCS20-317-FOCS20-367},
    year = {2020},
    doi = {10.1137/20M1383562},
    
    URL = {    
            https://doi.org/10.1137/20M1383562
        },
    eprint = {    
            https://doi.org/10.1137/20M1383562
        }
}

@inproceedings{Li022,
    author    = {Jiatu Li and
               Tianqi Yang},
    editor    = {Stefano Leonardi and
               Anupam Gupta},
    title     = {3.1\emph{n} - \emph{o}(\emph{n}) circuit lower bounds for explicit
               functions},
    booktitle = {{STOC} '22: 54th Annual {ACM} {SIGACT} Symposium on Theory of Computing,
               Rome, Italy, June 20 - 24, 2022},
    pages     = {1180--1193},
    publisher = {{ACM}},
    year      = {2022},
    url       = {https://doi.org/10.1145/3519935.3519976},
    doi       = {10.1145/3519935.3519976},
    bibsource = {dblp computer science bibliography, https://dblp.org}
}

@inproceedings{WebbHU23,
  author       = {Brae J. Webb and
                  Ian J. Hayes and
                  Mark Utting},
  editor       = {Robbert Krebbers and
                  Dmitriy Traytel and
                  Brigitte Pientka and
                  Steve Zdancewic},
  title        = {Verifying Term Graph Optimizations using {Isabelle/HOL}},
  booktitle    = {Proceedings of the 12th {ACM} {SIGPLAN} International Conference on
                  Certified Programs and Proofs, {CPP} 2023, Boston, MA, USA, January
                  16-17, 2023},
  pages        = {320--333},
  publisher    = {{ACM}},
  year         = {2023},
  url          = {https://doi.org/10.1145/3573105.3575673},
  doi          = {10.1145/3573105.3575673},
  bibsource    = {dblp computer science bibliography, https://dblp.org}
}

@inproceedings{Paul75,
author = {Paul, Wolfgang J.},
title = {A 2.5 n-lower bound on the combinational complexity of Boolean functions},
year = {1975},
isbn = {9781450374194},
publisher = {Association for Computing Machinery},
address = {New York, NY, USA},
url = {https://doi.org/10.1145/800116.803750},
doi = {10.1145/800116.803750},
booktitle = {Proceedings of the Seventh Annual ACM Symposium on Theory of Computing},
pages = {27–36},
numpages = {10},
location = {Albuquerque, New Mexico, USA},
series = {STOC '75}
}

@inproceedings{Sattler81,
  author       = {J{\"{u}}rgen Sattler},
  editor       = {Peter Deussen},
  title        = {Netzwerke zur simultanen Berechnung Boolescher Funktionen},
  booktitle    = {Theoretical Computer Science, 5th GI-Conference, Karlsruhe, Germany,
                  March 23-25, 1981, Proceedings},
  series       = {Lecture Notes in Computer Science},
  volume       = {104},
  pages        = {32--40},
  publisher    = {Springer},
  year         = {1981},
  url          = {https://doi.org/10.1007/BFb0017293},
  doi          = {10.1007/BFB0017293},
  bibsource    = {dblp computer science bibliography, https://dblp.org}
}

@article{BlumS84,
  author       = {Norbert Blum and
                  Martin Seysen},
  title        = {Characterization of all Optimal Networks for a Simultaneous Computation
                  of {AND} and {NOR}},
  journal      = {Acta Informatica},
  volume       = {21},
  pages        = {171--181},
  year         = {1984},
  url          = {https://doi.org/10.1007/BF00289238},
  doi          = {10.1007/BF00289238},
  bibsource    = {dblp computer science bibliography, https://dblp.org}
}

@inproceedings{AverkovBEGKKKLL25,
  author       = {Daniil Averkov and
                  Tatiana Belova and
                  Gregory Emdin and
                  Mikhail Goncharov and
                  Viktoriia Krivogornitsyna and
                  Alexander S. Kulikov and
                  Fedor Kurmazov and
                  Daniil Levtsov and
                  Georgie Levtsov and
                  Vsevolod Vaskin and
                  Aleksey Vorobiev},
  editor       = {Toby Walsh and
                  Julie Shah and
                  Zico Kolter},
  title        = {Cirbo: {A} New Tool for Boolean Circuit Analysis and Synthesis},
  booktitle    = {AAAI-25, Sponsored by the Association for the Advancement of Artificial
                  Intelligence, February 25 - March 4, 2025, Philadelphia, PA, {USA}},
  pages        = {11105--11112},
  publisher    = {{AAAI} Press},
  year         = {2025},
  url          = {https://doi.org/10.1609/aaai.v39i11.33207},
  doi          = {10.1609/AAAI.V39I11.33207},
  bibsource    = {dblp computer science bibliography, https://dblp.org}
}

@article{CarmosinoDJ2025,
  title={Simple Circuit Extensions for XOR in PTIME},
  author={Carmosino, Marco and Dang, Ngu and Jackman, Tim},
  journal={arXiv preprint arXiv:2511.16903. To Appear in STACS 2026},
  year={2025}
}

@article{GoncharovKL2025,
  author       = {Mikhail Goncharov and Alexander S. Kulikov and Georgie Levtsov},
  title        = {Smaller Circuits for Bit Addition},
  journal={arXiv preprint arXiv:2509.13966. To Appear in STACS 2026},
  year={2025}
}

@article{Thiemann2022,
  author  = {René Thiemann},
  title   = {Clique is not solvable by monotone circuits of polynomial size},
  journal = {Archive of Formal Proofs},
  month   = {May},
  year    = {2022},
  note    = {\url{https://isa-afp.org/entries/Clique_and_Monotone_Circuits.html},
             Formal proof development},
  ISSN    = {2150-914x},
}

@inproceedings{BraytonM10,
  author       = {Robert K. Brayton and
                  Alan Mishchenko},
  editor       = {Tayssir Touili and
                  Byron Cook and
                  Paul B. Jackson},
  title        = {{ABC:} An Academic Industrial-Strength Verification Tool},
  booktitle    = {Computer Aided Verification, 22nd International Conference, {CAV}
                  2010, Edinburgh, UK, July 15-19, 2010. Proceedings},
  series       = {Lecture Notes in Computer Science},
  volume       = {6174},
  pages        = {24--40},
  publisher    = {Springer},
  year         = {2010},
  url          = {https://doi.org/10.1007/978-3-642-14295-6\_5},
  doi          = {10.1007/978-3-642-14295-6\_5},
  bibsource    = {dblp computer science bibliography, https://dblp.org}
}

@article{SoekenRHM18,
  author       = {Mathias Soeken and
                  Heinz Riener and
                  Winston Haaswijk and
                  Giovanni De Micheli},
  title        = {The {EPFL} Logic Synthesis Libraries},
  journal      = {CoRR},
  volume       = {abs/1805.05121},
  year         = {2018},
  url          = {http://arxiv.org/abs/1805.05121},
  eprinttype    = {arXiv},
  eprint       = {1805.05121},
  bibsource    = {dblp computer science bibliography, https://dblp.org}
}

@inproceedings{ReichlSS23,
  author       = {Franz{-}Xaver Reichl and
                  Friedrich Slivovsky and
                  Stefan Szeider},
  editor       = {Brian Williams and
                  Yiling Chen and
                  Jennifer Neville},
  title        = {Circuit Minimization with QBF-Based Exact Synthesis},
  booktitle    = {Thirty-Seventh {AAAI} Conference on Artificial Intelligence, {AAAI}
                  2023, Thirty-Fifth Conference on Innovative Applications of Artificial
                  Intelligence, {IAAI} 2023, Thirteenth Symposium on Educational Advances
                  in Artificial Intelligence, {EAAI} 2023, Washington, DC, USA, February
                  7-14, 2023},
  pages        = {4087--4094},
  publisher    = {{AAAI} Press},
  year         = {2023},
  url          = {https://doi.org/10.1609/aaai.v37i4.25524},
  doi          = {10.1609/AAAI.V37I4.25524},
  bibsource    = {dblp computer science bibliography, https://dblp.org}
}

@inproceedings{TangZZCLYX25,
  author       = {Ruofei Tang and
                  Xuliang Zhu and
                  Xinyi Zhang and
                  Lei Chen and
                  Xing Li and
                  Mingxuan Yuan and
                  Jianliang Xu},
  title        = {{EDGE:} DBMS-Empowered Boolean Decomposition for {GIG} Synthesis},
  booktitle    = {62nd {ACM/IEEE} Design Automation Conference, {DAC} 2025, San Francisco,
                  CA, USA, June 22-25, 2025},
  pages        = {1--7},
  publisher    = {{IEEE}},
  year         = {2025},
  url          = {https://doi.org/10.1109/DAC63849.2025.11133306},
  doi          = {10.1109/DAC63849.2025.11133306},
  bibsource    = {dblp computer science bibliography, https://dblp.org}
}

@article{ShirazH18,
  author       = {Sumayya Shiraz and
                  Osman Hasan},
  title        = {A Library for Combinational Circuit Verification Using the {HOL} Theorem
                  Prover},
  journal      = {{IEEE} Trans. Comput. Aided Des. Integr. Circuits Syst.},
  volume       = {37},
  number       = {2},
  pages        = {512--516},
  year         = {2018},
  url          = {https://doi.org/10.1109/TCAD.2017.2705049},
  doi          = {10.1109/TCAD.2017.2705049},
  bibsource    = {dblp computer science bibliography, https://dblp.org}
}

@inproceedings{Braibant11,
  author       = {Thomas Braibant},
  editor       = {Jean{-}Pierre Jouannaud and
                  Zhong Shao},
  title        = {Coquet: {A} Coq Library for Verifying Hardware},
  booktitle    = {Certified Programs and Proofs - First International Conference, {CPP}
                  2011, Kenting, Taiwan, December 7-9, 2011. Proceedings},
  series       = {Lecture Notes in Computer Science},
  volume       = {7086},
  pages        = {330--345},
  publisher    = {Springer},
  year         = {2011},
  url          = {https://doi.org/10.1007/978-3-642-25379-9\_24},
  doi          = {10.1007/978-3-642-25379-9\_24},
  bibsource    = {dblp computer science bibliography, https://dblp.org}
}

@inproceedings{RandPZ17,
  author       = {Robert Rand and
                  Jennifer Paykin and
                  Steve Zdancewic},
  editor       = {Bob Coecke and
                  Aleks Kissinger},
  title        = {{QWIRE} Practice: Formal Verification of Quantum Circuits in Coq},
  booktitle    = {Proceedings 14th International Conference on Quantum Physics and Logic,
                  {QPL} 2017, Nijmegen, The Netherlands, 3-7 July 2017},
  series       = {{EPTCS}},
  volume       = {266},
  pages        = {119--132},
  year         = {2017},
  url          = {https://doi.org/10.4204/EPTCS.266.8},
  doi          = {10.4204/EPTCS.266.8},
  bibsource    = {dblp computer science bibliography, https://dblp.org}
}

@article{DBLP:journals/jcss/NisanW94,
  author       = {Noam Nisan and
                  Avi Wigderson},
  title        = {Hardness vs Randomness},
  journal      = {J. Comput. Syst. Sci.},
  volume       = {49},
  number       = {2},
  pages        = {149--167},
  year         = {1994},
  url          = {https://doi.org/10.1016/S0022-0000(05)80043-1},
  doi          = {10.1016/S0022-0000(05)80043-1},
  bibsource    = {dblp computer science bibliography, https://dblp.org}
}

@inproceedings{DBLP:conf/stoc/ImpagliazzoW97,
  author       = {Russell Impagliazzo and
                  Avi Wigderson},
  editor       = {Frank Thomson Leighton and
                  Peter W. Shor},
  title        = {\emph{P = BPP} if \emph{E} Requires Exponential Circuits: Derandomizing
                  the {XOR} Lemma},
  booktitle    = {Proceedings of the Twenty-Ninth Annual {ACM} Symposium on the Theory
                  of Computing, El Paso, Texas, USA, May 4-6, 1997},
  pages        = {220--229},
  publisher    = {{ACM}},
  year         = {1997},
  url          = {https://doi.org/10.1145/258533.258590},
  doi          = {10.1145/258533.258590},
  bibsource    = {dblp computer science bibliography, https://dblp.org}
}

@article{DBLP:journals/cc/KabanetsI04,
  author       = {Valentine Kabanets and
                  Russell Impagliazzo},
  title        = {Derandomizing Polynomial Identity Tests Means Proving Circuit Lower
                  Bounds},
  journal      = {Comput. Complex.},
  volume       = {13},
  number       = {1-2},
  pages        = {1--46},
  year         = {2004},
  url          = {https://doi.org/10.1007/s00037-004-0182-6},
  doi          = {10.1007/S00037-004-0182-6},
  bibsource    = {dblp computer science bibliography, https://dblp.org}
}

@article{DBLP:journals/mst/FurstSS84,
  author       = {Merrick L. Furst and
                  James B. Saxe and
                  Michael Sipser},
  title        = {Parity, Circuits, and the Polynomial-Time Hierarchy},
  journal      = {Math. Syst. Theory},
  volume       = {17},
  number       = {1},
  pages        = {13--27},
  year         = {1984},
  url          = {https://doi.org/10.1007/BF01744431},
  doi          = {10.1007/BF01744431},
  bibsource    = {dblp computer science bibliography, https://dblp.org}
}

@inproceedings{DBLP:conf/stoc/Smolensky87,
  author       = {Roman Smolensky},
  editor       = {Alfred V. Aho},
  title        = {Algebraic Methods in the Theory of Lower Bounds for Boolean Circuit
                  Complexity},
  booktitle    = {Proceedings of the 19th Annual {ACM} Symposium on Theory of Computing,
                  1987, New York, New York, {USA}},
  pages        = {77--82},
  publisher    = {{ACM}},
  year         = {1987},
  url          = {https://doi.org/10.1145/28395.28404},
  doi          = {10.1145/28395.28404},
  bibsource    = {dblp computer science bibliography, https://dblp.org}
}

@article{DBLP:journals/jacm/Williams14,
  author       = {Ryan Williams},
  title        = {Nonuniform {ACC} Circuit Lower Bounds},
  journal      = {J. {ACM}},
  volume       = {61},
  number       = {1},
  pages        = {2:1--2:32},
  year         = {2014},
  url          = {https://doi.org/10.1145/2559903},
  doi          = {10.1145/2559903},
  bibsource    = {dblp computer science bibliography, https://dblp.org}
}

@article{DBLP:journals/theoretics/00010S024,
  author       = {Lijie Chen and
                  Ce Jin and
                  Rahul Santhanam and
                  Ryan Williams},
  title        = {Constructive Separations and Their Consequences},
  journal      = {TheoretiCS},
  volume       = {3},
  year         = {2024},
  url          = {https://doi.org/10.46298/theoretics.24.3},
  doi          = {10.46298/THEORETICS.24.3},
  bibsource    = {dblp computer science bibliography, https://dblp.org}
}

@inproceedings{DBLP:conf/stoc/GrosserC25,
  author       = {Stefan Grosser and
                  Marco Carmosino},
  editor       = {Michal Kouck{\'{y}} and
                  Nikhil Bansal},
  title        = {Student-Teacher Constructive Separations and (Un)Provability in Bounded
                  Arithmetic: Witnessing the Gap},
  booktitle    = {Proceedings of the 57th Annual {ACM} Symposium on Theory of Computing,
                  {STOC} 2025, Prague, Czechia, June 23-27, 2025},
  pages        = {1341--1347},
  publisher    = {{ACM}},
  year         = {2025},
  url          = {https://doi.org/10.1145/3717823.3718216},
  doi          = {10.1145/3717823.3718216},
  bibsource    = {dblp computer science bibliography, https://dblp.org}
}

@inproceedings{SternagelT13,
  author       = {Christian Sternagel and
                  Ren{\'{e}} Thiemann},
  editor       = {Femke van Raamsdonk},
  title        = {Formalizing Knuth-Bendix Orders and Knuth-Bendix Completion},
  booktitle    = {24th International Conference on Rewriting Techniques and Applications,
                  {RTA} 2013, Eindhoven, The Netherlands, June 24-26, 2013},
  series       = {LIPIcs},
  volume       = {21},
  pages        = {287--302},
  publisher    = {Schloss Dagstuhl - Leibniz-Zentrum f{\"{u}}r Informatik},
  year         = {2013},
  url          = {https://doi.org/10.4230/LIPIcs.RTA.2013.287},
  doi          = {10.4230/LIPICS.RTA.2013.287},
  bibsource    = {dblp computer science bibliography, https://dblp.org}
}

@inproceedings{DBLP:conf/stoc/Maass84,
  author       = {Wolfgang Maass},
  editor       = {Richard A. DeMillo},
  title        = {Quadratic Lower Bounds for Deterministic and Nondeterministic One-Tape
                  Turing Machines (Extended Abstract)},
  booktitle    = {Proceedings of the 16th Annual {ACM} Symposium on Theory of Computing,
                  April 30 - May 2, 1984, Washington, DC, {USA}},
  pages        = {401--408},
  publisher    = {{ACM}},
  year         = {1984},
  url          = {https://doi.org/10.1145/800057.808706},
  doi          = {10.1145/800057.808706},
  bibsource    = {dblp computer science bibliography, https://dblp.org}
}

@inproceedings{DBLP:conf/focs/0001LO24,
  author       = {Lijie Chen and
                  Jiatu Li and
                  Igor C. Oliveira},
  title        = {Reverse Mathematics of Complexity Lower Bounds},
  booktitle    = {65th {IEEE} Annual Symposium on Foundations of Computer Science, {FOCS}
                  2024, Chicago, IL, USA, October 27-30, 2024},
  pages        = {505--527},
  publisher    = {{IEEE}},
  year         = {2024},
  url          = {https://doi.org/10.1109/FOCS61266.2024.00040},
  doi          = {10.1109/FOCS61266.2024.00040},
  bibsource    = {dblp computer science bibliography, https://dblp.org}
}

@article{DBLP:journals/sigact/Oliveira25,
  author       = {Igor C. Oliveira},
  title        = {{SIGACT} News Complexity Theory Column 124 Meta-Mathematics of Computational
                  Complexity Theory},
  journal      = {{SIGACT} News},
  volume       = {56},
  number       = {1},
  pages        = {41--68},
  year         = {2025},
  url          = {https://doi.org/10.1145/3726856.3726862},
  doi          = {10.1145/3726856.3726862},
  bibsource    = {dblp computer science bibliography, https://dblp.org}
}

@inproceedings{DBLP:conf/stacs/RenS22,
  author       = {Hanlin Ren and
                  Rahul Santhanam},
  editor       = {Petra Berenbrink and
                  Benjamin Monmege},
  title        = {A Relativization Perspective on Meta-Complexity},
  booktitle    = {39th International Symposium on Theoretical Aspects of Computer Science,
                  {STACS} 2022, Marseille, France (Virtual Conference), March 15-18,
                  2022},
  series       = {LIPIcs},
  volume       = {219},
  pages        = {54:1--54:13},
  publisher    = {Schloss Dagstuhl - Leibniz-Zentrum f{\"{u}}r Informatik},
  year         = {2022},
  url          = {https://doi.org/10.4230/LIPIcs.STACS.2022.54},
  doi          = {10.4230/LIPICS.STACS.2022.54},
  bibsource    = {dblp computer science bibliography, https://dblp.org}
}


\clearpage
\appendix
\section{Gate Elimination Rules}
\label{sec:gate-elim-rules}

\begin{figure}[H]
    \centering
    \begin{subfigure}[b]{0.23\textwidth}
       \centering
       \includegraphics[width=\textwidth]{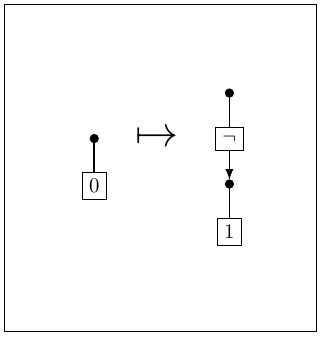}
       \caption{Zero Elim}
    \end{subfigure}
    \begin{subfigure}[b]{0.23\textwidth}
       \centering
       \includegraphics[width=\textwidth]{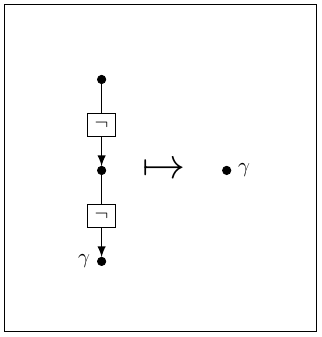}
       \caption{2xNegation Elim}
    \end{subfigure}
        \begin{subfigure}[b]{0.23\textwidth}
       \centering
       \includegraphics[width=\textwidth]{figures/pdf/rules/normalizing-and-dedup.pdf}
       \caption{AND-De-dup}
    \end{subfigure}
        \begin{subfigure}[b]{0.23\textwidth}
       \centering
       \includegraphics[width=\textwidth]{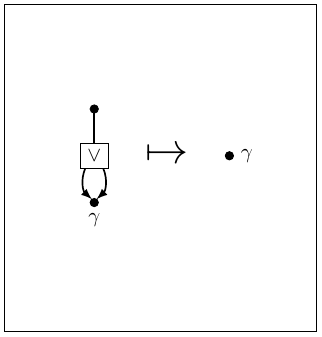}
       \caption{OR-De-dup}
    \end{subfigure}
    \caption{Normalizing Rules}
    \label{fig:normalizing-rules}
\end{figure}

\begin{figure}[H]
    \centering
    \begin{subfigure}[b]{0.23\textwidth}
       \centering
       \includegraphics[width=\textwidth]{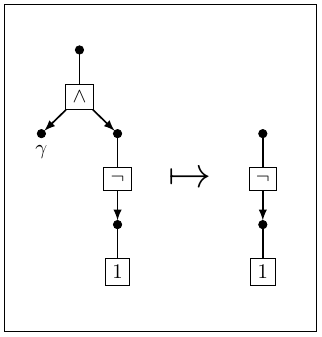}
       \caption{AND-Left}
    \end{subfigure}
    \begin{subfigure}[b]{0.23\textwidth}
       \centering
       \includegraphics[width=\textwidth]{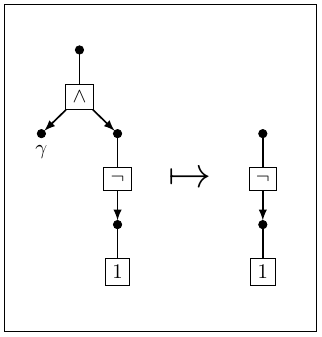}
       \caption{AND-Right}
    \end{subfigure}
        \begin{subfigure}[b]{0.23\textwidth}
       \centering
       \includegraphics[width=\textwidth]{figures/pdf/rules/fixing-or-left.pdf}
       \caption{OR-Left}
    \end{subfigure}
        \begin{subfigure}[b]{0.23\textwidth}
       \centering
       \includegraphics[width=\textwidth]{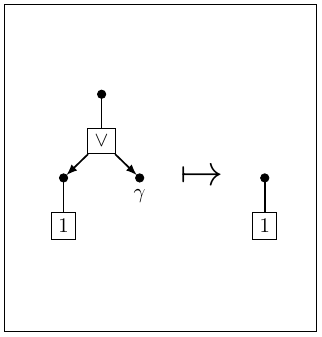}
       \caption{OR-Right}
    \end{subfigure}
    \caption{Fixing Rules}
    \label{fig:fixing-rules}
\end{figure}

\begin{figure}[H]
    \centering
    \begin{subfigure}[b]{0.23\textwidth}
       \centering
       \includegraphics[width=\textwidth]{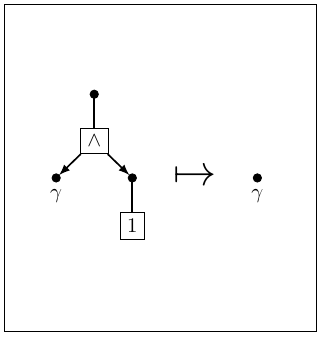}
       \caption{AND-Left}
    \end{subfigure}
    \begin{subfigure}[b]{0.23\textwidth}
       \centering
       \includegraphics[width=\textwidth]{figures/pdf/rules/passing-and-right.pdf}
       \caption{AND-Right}
    \end{subfigure}
        \begin{subfigure}[b]{0.23\textwidth}
       \centering
       \includegraphics[width=\textwidth]{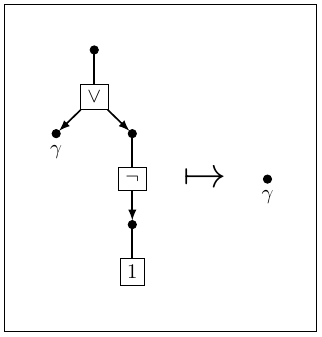}
       \caption{OR-Left}
    \end{subfigure}
        \begin{subfigure}[b]{0.23\textwidth}
       \centering
       \includegraphics[width=\textwidth]{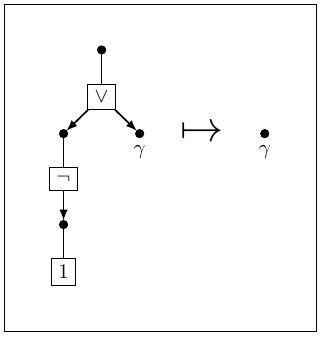}
       \caption{OR-Right}
    \end{subfigure}
    \caption{Passing Rules}
    \label{fig:passing-rules}
\end{figure}

\begin{figure}[H]
    \centering
    \begin{subfigure}[b]{0.23\textwidth}
       \centering
       \includegraphics[width=\textwidth]{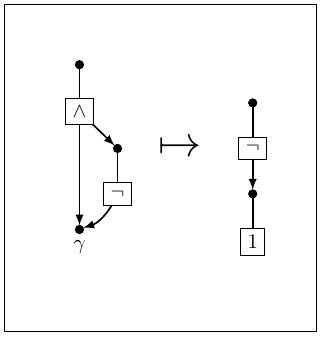}
       \caption{AND-Left}
    \end{subfigure}
    \begin{subfigure}[b]{0.23\textwidth}
       \centering
       \includegraphics[width=\textwidth]{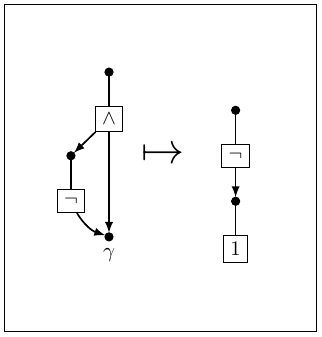}
       \caption{AND-Right}
    \end{subfigure}
        \begin{subfigure}[b]{0.23\textwidth}
       \centering
       \includegraphics[width=\textwidth]{figures/pdf/rules/tautology-or-left.pdf}
       \caption{OR-Left}
    \end{subfigure}
        \begin{subfigure}[b]{0.23\textwidth}
       \centering
       \includegraphics[width=\textwidth]{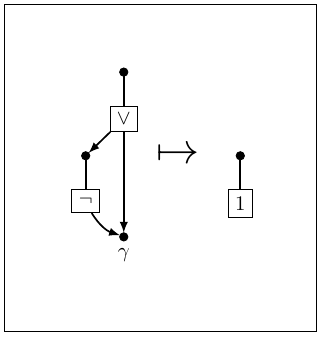}
       \caption{OR-Right}
    \end{subfigure}
    \caption{Tautology Rules}
    \label{fig:tautology-rules}
\end{figure}
\section{Circuit Size When NOT Gates are Free: Comparing Different Bases}
\label{sec:demorgan-u2-equivalence}

Recall that $\mathcal{U}_2$ is the basis consisting of all two-input Boolean functions except binary exclusive-or $\oplus$ and equivalence $\odot$ while the DeMorgan basis is $\{\land, \lor, \neg\}$. We show explicitly that these bases are equivalent in terms of size and depth when $\neg$ gates are not counted towards these complexity measures. The proof that $\cB_2$ and $\{\land, \oplus, \neg\}$ are equivalent when the latter's $\neg$ gates are free is identical.

\begin{lemma}[Folklore]\label{lem:demorgan-u2-equivalence}
    Let $f$ be any non-degenerate Boolean function besides $f(x) = \neg x$. Then the size and depth complexity of $f$ in the $\mathcal{U}_2$ ($\cB_2$) basis and the DeMorgan ($\{\land, \oplus, \neg\}$) basis where $\neg$ gates are free is equal.
\end{lemma}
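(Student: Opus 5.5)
We prove the two inequalities between the complexities separately, each by a gate-by-gate translation. One translation goes from DeMorgan (resp.\ $\{\land,\oplus,\neg\}$) circuits to $\cU_2$ (resp.\ $\cB_2$) circuits. The other goes from $\cU_2$ (resp.\ $\cB_2$) circuits back to DeMorgan (resp.\ $\{\land,\oplus,\neg\}$) circuits. Each translation must preserve the number of binary gates and must not increase the depth, where depth also counts only binary gates, since $\neg$ is free.

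\textbf{DeMorgan to $\cU_2$.} Start from a DeMorgan circuit $C$ for $f$. Every $\neg$ gate in $C$ is absorbed into the gates that read it, by relabeling.
\begin{itemize}
\item If a binary gate $g = a \ast b$ reads $\neg a$, it is replaced by the $\cU_2$ operator $(p,q)\mapsto (\neg p)\ast q$. This operator is still in $\cU_2$: the table of $\cU_2$ is closed under negating an input or the output, because $\oplus$ and $\odot$ are preserved as a pair under these operations.
\item If the output of $C$ is $\neg g$ with $g$ binary, the output negation is pushed into $g$'s label in the same way.
\item A $\neg$ applied directly to an input that feeds no binary gate can only occur when $f$ is $\neg x$ or degenerate. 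These cases are excluded by hypothesis.
\end{itemize}
Binary gates correspond one-to-one under this relabeling, so size and depth are both preserved.

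\textbf{$\cU_2$ to DeMorgan.} By inspecting Table~\ref{tab:u2}, every $\ast_k$ falls into one of three classes.
\begin{itemize}
\item It is a constant ($\ast_1,\ast_2$). Such a gate can be eliminated together with its dependents, because $f$ is non-degenerate; simplifying only removes gates.
\item It is a projection or negated projection ($\ast_3$--$\ast_6$). Such a gate is removed by rewiring its readers to the argument, or to a fresh free $\neg$ of the argument.
\item It has the form $(\neg)\big((\neg)p \circ (\neg)q\big)$ with $\circ\in\{\land,\lor\}$. Such a gate is replaced by one binary DeMorgan gate surrounded by at most three free $\neg$ gates.
\end{itemize}
For $\cB_2$ the only extra gates are $\oplus$ and $\odot = \neg\oplus$, which map to a $\oplus$ gate, possibly followed by a $\neg$. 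The count of binary gates, and the number of binary gates on any path, is therefore non-increasing.

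\textbf{Main obstacle.} The delicate step is the careful classification of the 14 (resp.\ 16) operators into the three classes above, together with the edge cases where a projection-type or constant gate is the output or sits directly on inputs. This is exactly where the exclusions in the statement are used. When the output is a negated input, as in $f(x)=\neg x$, DeMorgan needs only a free $\neg$ of cost $0$, while $\cU_2$ needs one binary gate, so the two complexities differ. Non-degeneracy guarantees that constants never need to be materialized. I would handle these boundary cases by a short induction on a topological order, with the invariant that every surviving wire computes $(\neg)$ of a corresponding wire in the source circuit.
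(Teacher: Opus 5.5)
Your proposal is correct and takes essentially the same route as the paper: two gate-by-gate translations. In the DeMorgan$\to\cU_2$ direction you absorb free $\neg$ gates into adjacent binary gates by relabeling, including the output negation; in the other direction you sort the 14 operators into constants, (negated) projections, and the eight single-gate AND/OR types. Your observation that $\cU_2$ is closed under negating inputs and outputs neatly replaces the paper's explicit relabeling tables; note, though, that the hypotheses on $f$ are needed only in the DeMorgan$\to\cU_2$ direction (the reverse is unconditional, since $\neg$ gates and constants are free there), not in the $\cU_2$-side edge cases you flag under ``Main obstacle.''
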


\begin{proof}
    Let $CC_U(f)$ denote the circuit size complexity of $f$ in the $\mathcal{U}_2$ basis. Let $CC_D(f)$ denote the circuit size complexity of $f$ in the DeMorgan basis where $\neg$ gates are free.

    We begin by arguing $CC_D(f) \leq CC_U(f)$ by translating any optimal $\mathcal{U}_2$ circuit into a DeMorgan circuit of equal size. We recall the 14 binary Boolean operators that form the $\mathcal{U}_2$ in Table \ref{tab:u2-revisited}.

\begin{table}[H]
\centering
\begin{tabular}{|cc|cccccccccccccc|}
\hline
$p$ & $q$ & $\ast_1$ & $\ast_2$ & $\ast_3$ & $\ast_4$ & $\ast_5$ & $\ast_6$ & $\ast_7$ & $\ast_8$ & $\ast_9$ & $\ast_{10}$ & $\ast_{11}$ & $\ast_{12}$ & $\ast_{13}$ & $\ast_{14}$ \\ \hline
T & T & T     & F     & T     & F     & T     & F     & T     & F     & T     & F        & T        & F        & T        & F        \\
T & F & T     & F     & T     & F     & F     & T     & T     & F     & F     & T        & F        & T        & T        & F        \\
F & T & T     & F     & F     & T     & T     & F     & F     & T     & T     & F        & F        & T        & T        & F        \\
F & F & T     & F     & F     & T     & F     & T     & T     & F     & T     & F        & F        & T        & F        & T        \\ \hline
\end{tabular}
\caption{The $\cU_2$ basis consists of all the binary Boolean operations except $\XOR_2$ and $\neg \XOR_2$}
\label{tab:u2-revisited}
\end{table}

We claim that no optimal circuit for $f$ uses $\ast_i$ gates for $i \in [6]$.
For $\ast_1$ and $\ast_2$, replacing these gates with the constants $0$ and $1$ would reduce the size of the circuit. Similarly, $\ast_3$ and $\ast_5$ gates are non-optimal since they just compute the first and second inputs respectively. If a $\ast_3$ gate appears in a circuit, we can instead feed it's first input into it's outputs for the same effect. The gates $\ast_4$ and $\ast_6$ simply negate their first and second inputs respectively. Similarly, if a $\ast_4$ or $\ast_6$ appear internally we can pass the first and second inputs up to the gates' outputs but we must change the labels of these gates. We record the transformations in Table \ref{tab:U2} below:

\begin{table}[H]
\centering
\begin{tabular}{|cc|cccccccc|}
\hline
       &        & $\ast_7$  & $\ast_8$  & $\ast_9$  & $\ast_{10}$ & $\ast_{11}$ & $\ast_{12}$ & $\ast_{13}$ & $\ast_{14}$ \\ \hline
$\neg p$ & $q$      & $\ast_{12}$ & $\ast_{11}$ & $\ast_{13}$ & $\ast_{14}$   & $\ast_8$    & $\ast_7$    & $\ast_9$    & $\ast_{10}$   \\
$p$      & $\neg q$ & $\ast_{13}$ & $\ast_{14}$ & $\ast_{12}$ & $\ast_{11}$   & $\ast_{10}$   & $\ast_9$    & $\ast_7$    & $\ast_8$  \\ \hline
\end{tabular}
\captionof{table}{Gate substitutions when removing superfluous internal $\ast_4$ or $\ast_6$ gates \label{tab:U2}}
\end{table}

For example, if a $\ast_4$ gate is the first input of $\ast_7$, then we can remove the $\ast_4$ gate, feed its' first input into the first input of the $\ast_7$ gate and then relabel the $\ast_7$ gate to be a $\ast_{12}$ gate. If a $\ast_4$ or $\ast_6$ gate is the output of the circuit, then it must be negating one of the remaining $8$ Boolean functions since $f(x) \neq \neg x$. We can apply the transformations listed in Table \ref{tab:negations}.

\begin{table}[H]
\centering
\begin{tabular}{|c|c|c|c|c|c|c|c|c|}
\hline
& $\neg(\ast_7)$           & $\neg(\ast_8)$            & $\neg(\ast_9)$           & $\neg(\ast_{10})$         & $\neg(\ast_{11})$    & $\neg(\ast_{12})$             & $\neg(\ast_{13})$   & $\neg(\ast_{14})$              \\ \hline
Is Equivalent To & $\ast_8$ & $\ast_7$ & $\ast_{10}$ & $\ast_9$ & $\ast_{12}$ & $\ast_{11}$ & $\ast_{14}$ & $\ast_{13}$  \\ \hline
\end{tabular}
\captionof{table}{Gate substitutions for removing superfluous negations at the output \label{tab:negations}}
\end{table}

Lastly, the remaining $8$ binary Boolean functions in the basis can all be represented in the DeMorgan basis with only one binary gate:

\begin{table}[H]
\centering
\begin{tabular}{|c|c|c|c|c|c|c|c|c|}
\hline
& $\ast_7$           & $\ast_8$            & $\ast_9$           & $\ast_{10}$         & $\ast_{11}$    & $\ast_{12}$             & $\ast_{13}$   & $\ast_{14}$              \\ \hline
Is Equivalent To & $p \lor \neg q$ & $\neg p \land q$ & $\neg p \lor q$ & $p \land \neg q$ & $p \land q$ & $\neg p \lor \neg q$ & $p \lor q$ & $\neg p \land \neg q$ \\ \hline
\end{tabular}
\captionof{table}{Translations Between $\cU_2$ and DeMorgan formulas for $\ast_7$ through $\ast_{14}$ \label{tab:u2-to-demorgan}}
\end{table}

To show $CC_U(f) \leq CC_D(f)$ we show how to take any optimal DeMorgan circuit and transform it into an equivalent size $\mathcal{U}_2$ circuit. To do so, if any $\neg$ gate has fanout $m$ where $m > 1$, we split the $\neg$ gate up into $m$ copies, each with fanout exactly one. We also remove any double negations. Neither of these transformations change the size of the circuit. 

Then, in increasing topological order, we replace each $\land$ and $\lor$ gate with one of the $\mathcal{U}_2$ gates depending on whether their left or right inputs is negated. There are $8$ possible configurations, and the correspondence can be seen in Table \ref{tab:u2-to-demorgan}. Lastly, if the entire circuit is negated (i.e. the final output gate is $\neg$), then we absorb it into the $\mathcal{U}_2$ gate below using the transformations listed in Table \ref{tab:negations}.

Since $CC_U(f) \leq CC_D(f)$ and $CC_D(f) \leq CC_U(f)$ we get $CC_U(f) = CC_D(f)$. Notice that none of these transformations change the \emph{depth} of the circuits as well, and thus the bases are equivalent with respect to depth complexity as well.
\end{proof}
\section{Circuit Simplification in Action: Proving Schnorr's Lower Bound}
\label{sec:ckt-rewrite-demo}

In this section, we provide the full proof of Theorem \ref{thm:schnorr}, Schnorr's lower bound for $\XOR_n$, using our system $\cS$ to demonstrate that this system is powerful enough to capture traditional gate elimination arguments. As a reminder, the theorem states that any DeMorgan circuit $C$ computing $\XOR_n$ requires at least $3(n - 1)$ costly gates \cite{Schnorr74}.

As a quick reminder, proofs by gate elimination often use the following argument to show that a circuit $C$ has property $\cP$.

\begin{enumerate}
    \item Assume that $C$ does \textbf{not} have property $\cP$.
    \item Select a variable $x_i$ and constant $\alpha$ for substitution $\{ x_i \mapsto \alpha \}$ using $\neg \cP$ and the structure of $C$.
    \item \textbf{\emph{Simplify}} $C$ under the substitution $\{ x_i \mapsto \alpha \}$, to obtain a constant-free circuit $C'$.
    \item Argue that a critical property $\cP'$ of $C'$ implies a contradiction, therefore $C$ \textbf{must} have property $\cP$.
\end{enumerate}

System $\cS$ formalizes the simplification procedure used in step three above.  Usually, this is not necessary: the critical property is something like $\cP' = $ ``simplification eliminated four gates'' and it is clear that every sequence of simplification steps reaches a circuit $C'$ with property $\cP'$.  However, to algorithmically process the results of gate elimination we must assert more complicated post-simplification properties like $\cP' = $ ``input $x_j$ has fanout one,'' where $x_j$ was the sibling of $x_i$ in $C$.  These more delicate properties are not so easily seen to hold after \emph{every} terminated simplification.  Furthermore, we often wish to carefully sequence and analyze only the \emph{first few steps} of gate elimination, and then apply ``all remaining simplifications'' without considering them in detail.  It is critical that $\cP'$ emerge no matter how the steps of elimination are sequenced.

The \emph{convergent} simplification procedure $\cS$ for circuits has the following property: every valid run of $\cS$ on $C$ with $\alpha$ substituted for any input $x_i$ terminates with the \emph{same} circuit $C'$, up to bisimulation.  Therefore, to carry out the gate elimination argument template above, one need only exhibit a particular run of $\cS$ and argue that the resulting $C'$ has critical property $\cP'$.

In Appendix \ref{sec:schnorr-structured-proof}, we provide an alternative proof of Theorem \ref{thm:schnorr} using the structured proof format of Lamport \cite{lamport1995write,lamport2012write}. Due to the case analysis and repeated proofs by contradiction present, this alternative format may be easier to verify. Furthermore, we believe that this presentation style makes the intricate case analysis present both in our proof of Schnorr more explicit and readable. Furthermore, this style of proof is more amenable to verification using a computer.

\begin{proof}[Proof of Theorem \ref{thm:schnorr}]
  Let $C$ be an optimal circuit for $\XOR_n$ with $n\geq 2$. In the original proof, the goal was to find some \textbf{setting} of an input \textit{node} such that gate elimination would remove at least $3$ \textit{costly} ($\land$ and $\lor$) \textit{gate}. Besides notational differences, our goal remains the same. We will find  a \textbf{substitution} of an input \textit{hyperedge} which causes at least $3$ \textit{costly} gate \textit{hyperedges} (those labeled $\land$ or $\lor$) to be removed during \textbf{rewriting}. Following Schnorr we build up the circuit locally around an input by repeated proofs by contradiction; we perform substitutions and rewrites to contradict $C$'s optimality or the downward self-reducibility of $\XOR_n$ thereby forcing $C$ to have the desired structure.

  In order to smooth the transition from viewing circuits as graphs to viewing them as term graphs, we will simply refer to input hyperedges as inputs and gate hyperedges as gates. We describe the substitution and rewriting steps at a high level.

  We wish to first sort the gates in ``topological order". In the traditional view of circuits as DAGs (where gates are nodes) this notion is straightforward. However, since our gates are edges, we must do so indirectly. We can sort the vertices of $C$ topologically and, since each node is the result node of a unique edge, we then order the gates according to their result nodes. From this point on when we refer to sorting inputs or gates topologically, formally we are doing this process.
  
  Fix a topological order and let $h$ be the first costly gate in $C$. Under the traditional view of circuits, we would conclude $h$ has $x_i$ (or $\neg x_i)$ and $x_j$ (or $\neg x_j$) as inputs for some $i, j \in [n]$. Formally, this means $h$ has two argument nodes whose terms are $x_i$ (or $\neg x_i$) and $x_j$ (or $\neg x_j$). Before we continue, however, we streamline our verbiage. We notate the possibility of $\neg$ gates by defining the shorthand $(\neg) f$ to mean ``$f$ (or $\neg f$)." If $f'$ has an argument node whose term is $(\neg) f$ we say that $f$ \emph{feeds} into $f'$ and that $f'$ is a \emph{successor} of $f$. Lastly if the label of a gate $f$ is $\land$ or $\lor$ we say $f$ is \emph{costly}. Combining these allows us to instead say $h$ is the costly successor of two inputs $x_i$ and $x_j$ for some $i, j \in [n]$ --- maintaining the formalism of our system while being more inline with the original proof.

  We assert $i \neq j$. Otherwise $h \equiv (\neg) x_i \diamond (\neg) x_i$ for some $\diamond \in \{\land, \lor\}$. If this occurred, we could apply a normalizing or tautology rule; rewriting $C$ would then delete $h$. Since $h$ is a costly gate this would decrease the the size of $C$, contradicting $C$'s optimality.

  We now wish to say that $x_i$ has \textit{fanout} at least $2$ where we define \emph{fanout} to the number of \emph{costly} successors a gate or input has. Again, assume the contrary: $h$ is $x_i$'s only costly successor. We can then substitute $x_j = \alpha$ where $\alpha$ is set so that during rewriting, we can apply a fixing rule to $h$. This would mean that $C|_{x_j = \alpha}$ does not depend on $x_i$ violating Fact \ref{fact:xor-non-degeneracy} (All Subfunctions of $(\neg)\XOR$ are Non-Degenerate).

  Let $f$ be another costly successor of $x_i$. We can conclude that $(\neg) f$ is not the output of the circuit (i.e. the root node). If it were, then we could substitute $x_i = \beta$ and fix $(\neg)f$ during rewriting. This would fix the output of the circuit so that $C|{x_i = \beta}$ is constant contradicting Fact \ref{fact:xor-dsr} ($(\lnot) \XOR$ is fully DSR).
  
  Let $f'$ be a costly successor of $f$ and observe $h \neq f'$ since $f < f'$ in our topological ordering and $h$ was the first costly gate in the ordering. We now observe that if we set $x_i = \beta$ so that $f$ is fixed, then during rewriting we eliminate $h$ (using a fixing or passing rule), $f$ (using a fixing rule), and $f'$ (using a fixing or passing rule). This is because once $f$ is fixed, then the fixing $(\neg) 1$ now feeds into $f'$. In this case we say that $(\neg) 1$ \textit{inherited} $f'$ as a successor after this rewrite applies, and thus another rewriting rule will apply deleting $f'$.

  Thus $\XOR_{n}$ requires at least $3$ more costly gates than $\XOR_{n-1}$. Since $\XOR_{1}$ requires zero costly gates, we have using induction that $\XOR_{n}$ requires at least $3(n-1)$.
\end{proof}

\subsection{Structured Proof of Schnorr}
\label{sec:schnorr-structured-proof}

Below, we prove Theorem \ref{thm:schnorr} with a structured proof as proposed by Lamport \cite{lamport1995write,lamport2012write}. We believe that this presentation style makes the intricate case analysis present in this proof more explicit and readable. Furthermore, this style of proof is more amenable to verification using a computer.

\begin{proof}[Structured Proof of Theorem \ref{thm:schnorr}]

  Let $C$ be an optimal circuit for $\XOR_n$ where $n \geq 2$.
  \begin{enumerate}
  \item Let $h$ be the \emph{first} AND,OR gate of $C$ in topological order, so $h$ has $(\neg)x_i, (\neg)x_j$ as inputs for $i,j \in \nat$.

  \item $i \neq j$
    \begin{enumerate}
    \item Suppose not, so $i = j$
    \item Then $h \equiv (\neg)x_i \diamond (\neg)x_i$ where $\diamond \in \{\land , \lor\}$
    \item Thus, one of the normalizing or tautology rules from Gate Elim TGRS matches $h$
    \item \textbf{Rewrite} $C$ finding $h$ deleted
    \item Contradiction to optimality of $C$
    \end{enumerate}

  \item The fanout of $x_i$ must be at least 2.
    \begin{enumerate}
    \item Suppose not, so fanout of $x_i$ is 1.
    \item \textbf{Substitute} $x_j = \alpha$ in $C$ to fix $h$
    \item \textbf{Rewrite} $C$, finding that fanout of $x_i$ is now 0
    \item Thus, $C \restrict_{x_j = \alpha}$ does not depend on $x_i$
    \item Contradiction to Fact \ref{fact:xor-non-degeneracy}.
    \end{enumerate}

  \item Let $f$ be the other gate taking $x_i$ as input so $f \neq h$

  \item $(\neg)f$ is not the output gate of $C$.
    \begin{enumerate}
    \item Suppose it is, so $(\neg)f$ is the output gate of $C$.
    \item \textbf{Substitute} $x_i = \alpha$ in $C$ to fix $f$
    \item \textbf{Rewrite} $C$, finding that output of $C$ is constant
    \item Thus, $C \restrict_{x_i = \alpha}$ is a constant function
    \item Contradiction to Fact \ref{fact:xor-dsr}.
    \end{enumerate}

  \item Let $f'$ be a \emph{costly} successor of $f$ in $C$, such must exist.
    
  \item Eliminate three distinct gates with a substitution.
    \begin{enumerate}
    \item \textbf{Substitute} $x_i = \alpha$ in $C$ to fix $f$
    \item \textbf{Rewrite} $C$, finding at least $f, h, f'$ deleted
    \item \emph{argument:} Observe that $x_i$ ``touches'' gate $h$ to eliminate, and it fixes $f$ which ``touches'' gate $f'$
    \item there exists a 1-bit restriction eliminating $\ge 3$ gates
    \end{enumerate}

  \item Conclude $\XOR_n$ requires at least $3$ more costly gates than $\XOR_{n-1}$.
  \item Observe $\XOR_1$ requires $0$ costly gates.
  \item Use induction to show $\XOR_n$ requires at least $3(n-1)$.
\end{enumerate}  
\end{proof}

\end{document}